\documentclass[journal,doublecolumn]{IEEEtran}
\usepackage{cite}
\usepackage[english]{babel}

\usepackage{tikz}
\usepackage{float}
\usepackage{amsthm}
\usepackage{amssymb}
\usepackage{commath}
\usepackage{mathrsfs}
\usepackage{mdframed}
\usepackage{multirow}
\usepackage{graphicx}
\usepackage{booktabs}
\usepackage{tabularx}
\usepackage{subcaption}
\captionsetup[subfigure]{skip=0pt} 
\usepackage[colorlinks=true, allcolors=blue]{hyperref}
\usepackage{array}
\usepackage{comment}
\usepackage{relsize,amsfonts}
\usepackage{mathtools}
\usepackage[ruled,lined,linesnumbered]{algorithm2e}
\usetikzlibrary{decorations.pathreplacing,calc}

\newcolumntype{C}{>{\centering\arraybackslash}X}

\DeclareSymbolFont{myletters}{OML}{ztmcm}{m}{it}
\DeclareMathSymbol{\uplambda}{\mathord}{myletters}{"15}
\usepackage{filecontents,lipsum}

\theoremstyle{definition}

\newtheorem{proposition}{Proposition}
\newtheorem{lemma}{Lemma}
\newtheorem{remark}{Remark}
\newtheorem{theorem}{Theorem}
\newtheorem{corollary}{Corollary}
\newtheorem{assumption}{Assumption}

\newcommand{\E}{\mathbb{E}}

\newcommand{\cQ}{\mathcal{Q}}

\newcommand{\cE}{\mathcal{E}}
\newcommand{\cF}{\mathcal{F}}

\newcommand{\cS}{\mathcal{S}}
\newcommand{\vb}{\mathbf{b}}
\newcommand{\vw}{\mathbf{w}}
\newcommand{\vg}{\mathbf{g}}
\newcommand{\ve}{\mathbf{e}}
\newcommand{\vv}{\mathbf{v}}

\title{Pinching-antenna-enabled Federated Learning: Tail Latency, Participation, and Convergence Analysis}

\author{Yushen Lin,~\IEEEmembership{Student Member,~IEEE,}
{Zihan Chen,}~\IEEEmembership{Member,~IEEE,}
       and {Zhiguo Ding,~\IEEEmembership{Fellow,~IEEE}}
\vspace{-2em} 
\thanks{Y. Lin is with the School of Electrical and Electronic Engineering, The University of Manchester, M13 9PL, U.K. (e-mail: yushen.lin@manchester.ac.uk).

Zihan Chen is with the Information Systems Technology and Design Pillar, Singapore University of Technology and Design, Singapore. (e-mail: zihan\_chen@mymail.sutd.edu.sg)

Zhiguo Ding is with the Department of Electrical and Electronic Engineering,
University of Manchester, Manchester, UK, and the Department of Computer and Information Engineering, Khalifa University, Abu Dhabi, UAE (e-mail: zhiguo.ding@manchester.ac.uk).}
}

\begin{document}

\maketitle

\begin{abstract}
Federated learning (FL) in wireless networks is limited by straggler delays from unpredictable channel conditions. 
In this paper, we investigate the pinching‑antenna system (PASS), which dynamically “pinches” the radiator along a dielectric waveguide to shorten the worst links. 
In synchronous FL (SFL), we prove that PASS shortens the worst‑link distance, and it increases the on‑time completion probability in asynchronous FL (AFL). 
Accordingly, SFL exhibits stochastic dominance on round time, while AFL yields explicit latency and participation gains. We then pair physical‑layer (PHY)-aware sampling with error-feedback compression and prove that pinching raises the minimum inclusion probability, thus shrinking both the sampling variability and compression-induced floors in a Lyapunov analysis.
Simulations demonstrate consistent wall clock speedups and markedly shorter latency tails. By addressing stragglers at their PHY root, PASS complements higher‑layer scheduling and accelerates wireless FL in both SFL and AFL.
\end{abstract}
\begin{IEEEkeywords}
Pinching antenna (PA), federated learning (FL), asynchronous FL (AFL), synchronous FL (SFL), latency reduction
\end{IEEEkeywords}

\vspace{-0.5em}
\section{Introduction}
Wireless federated learning (FL) promises training on the device without sharing raw data, but the unreliability of the radio links creates stragglers that dominate the wall clock time \cite{FL_Wireless_Qin}.
In synchronous FL (SFL), per‑round latency is set by the slowest uplink \cite{Straggler_FL}. In asynchronous federated learning (AFL), timely completion within a deadline governs participation \cite{Straggler_literature}. 
A wide range of higher-layer approaches—e.g., asynchronous protocols, intelligent user selection, and resource scheduling—has been proposed to mitigate straggler symptoms at the system level \cite{ASGD, Lin_TWC, FL_liteature}. 
We pose a complementary question: \emph{Can we address stragglers at physical-layer (PHY) root?}

This question has gained practical relevance with the advent of the pinching-antenna system (PASS) \cite{PinchingAntenna_DOCOMO, Ding_PASS_Original}. As a reconfigurable antenna technology, PASS provides an unprecedented ability to manipulate the propagation environment by creating targeted line-of-sight (LoS) links on demand. In effect, PASS can transform large-scale channel gain from a random quantity into a tunable parameter \cite{Ouyang_array_gain}. 
By markedly improving the worst wireless links, PASS has the potential to preempt stragglers before they dominate training latency. 

The goal of this paper is to go beyond intuition and establish a principled foundation for quantifying the impact of PASS on both latency and convergence. 
SFL is critical-path limited by the slowest uplink, and AFL is deadline or throughput limited with completion times. Upper-layer approaches, i.e., asynchrony, user selection, and resource scheduling, treat the symptom of stragglers by altering participation sets or deadlines, whereas PASS treats the physical cause by shortening worst links. 
Since wireless stragglers arise from path loss and fading, a PA-equipped system can shrink worst-link distances in SFL and raise timely completions in AFL. Studying both regimes clarifies where PASS helps from uniform stochastic dominance on round time (SFL) to explicit latency gains and participation lifts (AFL), and how these gains map to convergence.
Our analysis shows that PASS increases and thus reduces the sampling factor regardless of the scheduler, so any scheduler that benefits from higher timely participation inherits these gains.
The main contributions are summarized as follows:
\begin{enumerate}
\item For SFL, we prove that the bottleneck distance (the farthest selected user) under PASS is never larger than conventional BS, and is strictly smaller when user locations are continuously distributed, establishing stochastic dominance on round time.
We derive explicit non-asymptotic latency gains with a computable SNR threshold, showing PASS achieves strictly positive time savings in the high-SNR regime for AFL.
\item  We establish that PASS yields no fewer participating users than conventional systems for any spatial distribution under a given per-round latency budget, with closed-form participation gains demonstrating larger benefits in highly clustered or widely dispersed scenarios where conventional systems suffer severe straggler effects.
\item We show that PASS raises the minimum inclusion probability $\pi_{\text{min}}$ by enlarging the eligible-user set each round, which directly reduces the Horvitz–Thompson (HT) amplification factor and lowers both sampling variance and compression-induced error floors in the aggregation mechanism.
\item Through a Lyapunov analysis with explicit stepsize and weighting conditions, we separate gradient-variance floors from compression floors and prove that, under bounded staleness, PASS enables higher update rates and larger stable stepsizes, yielding faster wall-clock convergence in both SFL and AFL regimes.
\end{enumerate}

\vspace{-0.5em}
\section{Related Work}
\subsection{Recent Developments in PASS}
Recent works have advanced the design and understanding of PASS. For instance, Zeng et al. \cite{PA_zeng_literature} highlighted that optimizing pinching antenna placement together with power, time slot, and subcarrier allocation is critical to fully exploit PASS channel reconfigurability. 
Ding et al. \cite{PA_ding_place} presented closed-form solutions to the PA-placement problem, showing that the fairness-optimal OMA placement is central, while in NOMA the optimal position skews toward the nearest user to enhance performance.
Wang et al. \cite{PA_wang_literature} developed a physics-based model for PASS and formulated a joint transmit and pinching-beamforming optimization, demonstrating that PASS can reduce the required transmit power by over 95\% compared to a conventional massive MIMO baseline for the same performance. Ouyang et al. \cite{Ouyang_array_gain} derive closed-form expressions for the array gain of multi-antenna PASS setups, showing that by optimizing the number and spacing of pinching antennas along the waveguide, PASS can achieve much higher array gains than traditional antenna arrays. 
Moreover, in a downlink multi-user design, Wang et al. \cite{kaidi_PA} showed that a NOMA-assisted PASS with multiple dynamically activated PAs yields notably higher sum rates than a conventional fixed-antenna system. For the uplink, Tegos et al. \cite{PA_literature_Tegos} investigated the uplink of PASS and proposed jointly optimizing PA positions and radio resources to maximize the minimum user data rate, achieving robust fairness gains and outperforming conventional systems in worst-case user throughput.
These advances underscore the potential of PASS to transform wireless networks by enabling on-demand, near-LoS links, and they motivate exploring PASS in diverse communication paradigms beyond traditional setups.

\subsection{FL in Wireless Networks}
Stragglers have been addressed by asynchrony with bounded staleness and sharp analyses, user selection/scheduling to trade off latency vs. bias, and model adaptation, e.g., partial model training for weak users \cite{bounded_staleness_literature, client_selection_literature, partial_model_training}. On the other hand, a major issue of wireless FL is the communication bottleneck when many users upload large model updates over a shared channel. To alleviate this, researchers have developed over-the-air (OTA) aggregation techniques that exploit the signal-superposition property of the wireless medium. For example, 
Zhu et al. \cite{OTA_literature} proposed a one-bit OTA-FL scheme where users quantize gradients to 1-bit and transmit over a common channel with digital modulation, using a majority-vote decoding at the server. Another line of work focuses on optimizing resource usage for FL in wireless environments. 
Beyond communication efficiency, the straggler problem, where slow users delay FL rounds, has prompted innovative solutions. 
Liu et al. \cite{FL_literature_2_adaptive} introduced an adaptive clustering strategy that jointly optimizes computation and communication to minimize total user energy consumption under a strict FL latency deadline.
Wu et al. \cite{FL_literature_3_straggler} tackled heterogeneity by allowing weaker users to train only a partial model instead of the full model, which enables all users to contribute according to their capacity, reaching the target accuracy faster than standard FedAvg.
These developments from communication-efficient aggregation to straggler-aware scheduling have greatly improved the scalability and resilience of wireless FL.

Motivated by the above advances, integrating PASS with FL emerges as a promising approach to address the straggler problem at its physical-layer root while leveraging higher-layer optimizations \cite{wu_PA_FL_2025}. By proactively strengthening the worst wireless links, a PASS-equipped network can prevent extreme delays before they occur. Recent work by Wu et al. \cite{wu_PA_FL_2025}, demonstrated that dynamically deploying PAs alongside conventional antennas can effectively mitigate wireless stragglers by establishing strong LoS links on demand. In their hybrid PASS-conventional design, a fuzzy-logic user grouping and a deep reinforcement learning scheduler jointly optimize PA placement and resource allocation, yielding faster FL convergence and lower round latency than baseline scheduling. Aside from this initial study, the convergence of PASS with FL remains largely unexplored, highlighting the importance of investigating PA-assisted FL in future wireless networks.

\noindent\textbf{\textit{Notation}.} Let $\mathcal{U}[a,b]$ denote the uniform distribution on $[a,b]$. Let $\mathrm{Beta}(\cdot,\cdot)$ denote the Beta distribution. The statistical expectation operator is denoted by $\E[\cdot]$. Denote $\Phi(\cdot)$ as the standard Gaussian cumulative distribution function (CDF). For coordinates $\{X_i\}_{i=1}^K$, write $X_{(i)}$ for the $i$th order statistic (ascending). Define $Y_i \triangleq |X_i|$ and write $Y_{[i]}$ for the $i$th order statistic of $\{Y_i\}_{i=1}^K$.
Let $d_w$ denote the model dimension.
Let $K$ denote the total number of users, and $T_d$ as global per-upload deadline. 
In SFL, the server schedules $M$ users per round with $1\le M\le K$, and we write $\cS_t$ for the scheduled set at round $t$ with $|\cS_t|=M$.
In AFL, uploads are single-user per transmission; we keep the same notation for consistency.

\vspace{-0.5em} 
\section{Using Pinching Antenna in FL}
\subsection{System Model}
Consider an uplink scenario where users move along the $x$-axis, which models cases such as movement along a one-way traffic lane or a railway platform. The horizontal coordinate of the user $i$ follows $X_i\sim\mathcal{U}\left[-\frac{D}{2},\frac{D}{2}\right]$. Let $x\in[-\frac{D}{2},\frac{D}{2}]$ denote the horizontal coordinate of the user and $z\in[-\frac{D}{2},\frac{D}{2}]$ the coordinate of the active radiator along the waveguide. 
Consider an uplink from a single-antenna user at horizontal coordinate $x$ to an active radiator located at $z$ along a dielectric waveguide mounted at height $d>0$. The link distance is
$r(x,z)\triangleq\sqrt{(x-z)^2+d^2}$. For reference purposes, $R_0(x)\equiv R(x,0)$ represents the spectral efficiency achieved with a conventional fixed-antenna configuration. For PA, $z$ is decision‑dependent as specified later.
Under a spherical-wave LoS model, the complex baseband channel is \cite{Ding_PASS_Original,Ouyang_array_gain}
\begin{equation}\label{eq:channel}
  h(x,z)=\frac{\sqrt{\eta_f}\,e^{-j k_0 r(x,z)}}{r(x,z)},
\end{equation}
where $\eta_f\triangleq \frac{c_0^2}{16\pi^2 f_c^2}$, $c_0$ is the speed of light, $f_c$ denotes the carrier frequency, and $k_0 = \frac{2\pi}{\lambda_w}$ is the wavenumber with $\lambda_w=\frac{c_0}{f_c}$.
With transmit power $P$ and noise power $\sigma_n^2$, and denoting by $\phi$ the in-waveguide excitation/phasor, the instantaneous SNR is given by
\(
  \frac{P}{\sigma_n^2}\,\bigl\|h(x,z)\bigr\|^2 .
\)
The achievable spectral efficiency is
\begin{equation}\label{eq:Rfull}
  R(x,z)=\log_2\!\Bigl(1+\frac{P}{\sigma_n^2}\,\bigl\|h(x,z)\bigr\|^2\Bigr).
\end{equation}
Throughout, each uplink activates a single radiator. In AFL, the PA pins to the scheduled user so that $x-z = 0$. In SFL, the PA selects one radiator position $z^*$ per round and keeps it fixed; $z^*$ is chosen as the midpoint of the tightest window covering the $M$ scheduled users.
Let $\mathcal{S} \coloneqq \frac{P\,\eta_f}{\sigma_n^2}$. Then \eqref{eq:Rfull} becomes $ R(x,z) = \log_2\left(1 + \frac{\mathcal{S}}{(x-z)^2 + d^2}\right)$.

The per‑upload latency can be expressed as
\(
  \tau(x,z)=\frac{c}{R(x,z)},
\)
with $c = \frac{B_t}{\Delta W}$, where $B_t=d_w b_t$ is the payload size for a $d_w$ dimensional model with $b_t$ bits per coordinate, $W$ is the uplink bandwidth and $\Delta$ denotes the scaling parameter, i.e., in SFL with FDMA, $\Delta = \frac{1}{M}$ (equal bandwidth split $W/M$); in AFL,  $\Delta = 1$.

\begin{figure}[t]
\centering
\begin{tikzpicture}[
    >=stealth,
    scale=0.90,
    every node/.style={font=\footnotesize},
    safeTag/.style={fill=white,rounded corners=1pt,inner sep=1.2pt,outer sep=0pt},
    safeLbl/.style={fill=white,inner sep=1pt,outer sep=0pt},
    radiatorPA/.style={fill=black,draw=black,line width=0.6pt},
    wave/.style={line width=1.0pt},
    linkPA/.style={line width=0.9pt},
]
\def\panelgap{2.30cm}

\tikzset{
  usericon/.pic={
    \fill (0,0.10) circle (0.055); 
    \draw[line width=0.5pt] (-0.14,-0.02) .. controls (-0.10,-0.16) and (0.10,-0.16) .. (0.14,-0.02); 
  }
}

\newcommand{\BS}[2]{%
  \begin{scope}[shift={(#1,#2)}]
    \draw[rounded corners=1pt,fill=white] (-0.38,0) rectangle (0.38,0.34);
    \node at (0,0.17) {\scriptsize BS};
    \draw (0,0.34)--(0,0.62);
    \fill (0,0.62) circle (0.02);
    \draw (0.00,0.62) ++(0,0) arc[start angle=90,end angle=35,radius=0.22];
    \draw (0.00,0.62) ++(0,0) arc[start angle=90,end angle=20,radius=0.32];
    \draw[wave] (0.38,0.17)--(0.62,0.17);
  \end{scope}
}

\begin{scope}[yshift=\panelgap]
  \node[anchor=west, safeTag] at (-3.0,1.12) {(a) SFL};

  \BS{-3.80}{0.95}
  \coordinate (feedA) at (-3.18,1.12);
  \draw[wave] (feedA)--(-3.02,0.80);
  \draw[wave] (-3.02,0.80)--(3.00,0.80);
  \node[anchor=east, safeLbl] at (2.95,1.02) {waveguide};

  \draw (-3.00,-0.35)--(3.00,-0.35);
  \foreach \xx in {-2.25,-1.50,-0.80,-0.20,0.30,0.90,1.50,2.15}{
    \fill (\xx,-0.35) circle (0.05);
  }

  \def\winL{-0.80}
  \def\winR{ 0.90}

  \draw[densely dashed,line width=0.5pt] (0.00,-0.35)--(0.00,0.66);
  \draw[radiatorPA] (0.00,0.80) circle (0.08);
  \node[safeLbl,anchor=west] (pacenter) at (0.55,0.58) {PA centered};
  \draw[->] (pacenter.west)++(-0.06,0) -- (0.00,0.74);

  \draw[decorate,decoration={brace,amplitude=3pt}] (\winL,-0.06) -- (\winR,-0.06)
      node[midway,yshift=8pt,safeLbl]{tightest window};
  \draw[line width=0.5pt] (\winL,-0.31)--(\winL,-0.06);
  \draw[line width=0.5pt] (\winR,-0.31)--(\winR,-0.06);

  \coordinate (xu) at (\winR,-0.35);
  \draw[linkPA] (xu)--(0.00,0.80);
  \path (xu) ++(0,0.12) pic {usericon}; 
\end{scope}

\begin{scope}
  \node[anchor=west, safeTag] at (-3.0,1.12) {(b) AFL};

  \BS{-3.80}{0.95}
  \coordinate (feedB) at (-3.18,1.12);
  \draw[wave] (feedB)--(-3.02,0.80);
  \draw[wave] (-3.02,0.80)--(3.00,0.80);
  \node[anchor=east, safeLbl] at (2.95,1.02) {waveguide};

  \draw (-3.00,-0.35)--(3.00,-0.35);
  \foreach \xx in {-2.40,-1.40,-0.40,1.00,2.10}{
    \fill (\xx,-0.35) circle (0.05);
  }

  \coordinate (xu2) at (1.00,-0.35);
  \path (xu2) ++(0,0.12) pic {usericon};

  \draw[radiatorPA]   (1.00,0.80) circle (0.08);
  \draw[linkPA]       (xu2)--(1.00,0.80);

  \node[safeLbl,anchor=west] (paTxt) at (2.05,0.46) {PA over user};
  \draw[->] (paTxt.west)++(-0.08,0) -- (1.00,0.80);

  \node[safeLbl,anchor=west] at (1.35,-0.02) {meets $T_d$ \checkmark};
\end{scope}
\end{tikzpicture}

\caption{(a) SFL—PA centered on the tightest window; (b) AFL—PA over the user.}
\label{fig:system-two-panel}
\end{figure}

\vspace{-0.5em}
\subsection{FL Setup}
The federated optimization problem is formulated as the minimization of a global objective \cite{joint_communications_FL}:
\begin{equation}
    F(\vw)=\frac{1}{K}\sum_{i=1}^K f_i(\vw),
\end{equation}
where each local objective $f_i$ is $L$-smooth. 
The server schedules a set $\cS_t$ with $|\cS_t|=M$ and aggregates
\begin{equation}
    \widehat{\vg_t}=\frac{1}{M}\sum_{i\in\cS_t}\vg_{i,t},\qquad
    \vw_{t+1}=\vw_t-\eta_t\,\widehat{\vg_t}.
\end{equation}
The wall–clock round time is dominated by the slowest uplink in $\cS_t$.
In the communication round (or event) $t$, the user $i$ computes a stochastic gradient $\vg_{i,t}$ in the current global model $\vw_t$, satisfying:
\begin{align}
\E[\vg_{i,t}\mid\cF_t]=\nabla f_i(\vw_t),\quad
\E\!\big[\|\vg_{i,t}-\nabla f_i(\vw_t)\|^2\mid\cF_t\big]\le\sigma^2,
\end{align}
where $\cF_t$ represents the natural filtration up to round $t$, and $\sigma^2$ bounds the local gradient variance. 

The data heterogeneity across users is quantified by the condition \cite{Lin_TWC}:
\begin{align}
\frac{1}{K}\sum_{i=1}^K\|\nabla f_i(\vw)-\nabla F(\vw)\|^2\le \delta^2,
\end{align}
for all $\vw \in \mathbb{R}^{d_w}$, where $\delta^2$ represents the heterogeneity parameter. A larger $\delta^2$ indicates a higher degree of data heterogeneity.

The system operates in single-user uplink mode for each transmission in AFL. For example, a user participates if it completes within a deadline $T_d$. More details will be discussed in Section \ref{subsec:HT} later.
Under conventional ($\mathrm{CONV}$) baseline, the radiator remains fixed, with user positions uniformly distributed as $x\sim\mathcal U[-\frac{D}{2},\frac{D}{2}]$. under PA, the radiator pins to the user so $x-z=0$ and the link distance is $r = d$. In SFL, the setups for $\mathrm{CONV}$ and PA are similar, with differences discussed below.

\subsection{Conventional Antenna: Straggler Distance Analysis}
In SFL with $M$ scheduled users, the round latency is determined by the slowest participant. The optimal communication strategy under a fixed antenna configuration is achieved by selecting the $M$ users with the smallest distance offsets $|x|$, resulting in a straggler offset equal to the $M$‑th order statistic of ${|X_i|}$. This approach minimizes the maximum distance among any $M$ users, with the resulting round latency governed by the $M$-th smallest absolute offset.

Consider $K$ single‑antenna users on the $x$‑axis with i.i.d.\ positions
$X_1,\ldots,X_K \sim \mathcal{U}[-\frac{D}{2},\frac{D}{2}]$. Define $Y_i:=|X_i| \sim \mathcal{U}[0,\frac{D}{2}]$ and write $Y_{[1]}\le\cdots\le Y_{[K]}$ for their order statistics.

\paragraph*{Bottleneck distance under $\mathrm{CONV}$}
Selecting the $M$ smallest $|X_i|$ yields the per-round straggler offset as the $M$-th order statistic:
\begin{equation}
x_{\rm str}^{\rm CONV}=Y_{[M]}.
\label{eq:conv-xstr}
\end{equation}
\begin{remark}
The exact second moment of the $\mathrm{CONV}$ straggler distance:
\begin{equation}
\mathbb E\!\big[Y_{[M]}^{2}\big] 
= \Big(\tfrac{D}{2}\Big)^{2}\,\mathbb E\!\big[\widetilde Y_{[M]}^{2}\big]
= \Big(\tfrac{D}{2}\Big)^{2}\,\frac{M(M{+}1)}{(K{+}1)(K{+}2)}.
\label{eq:conv-YM-second-moment}
\end{equation}
\end{remark}
\begin{IEEEproof}
    It can be derived by using the mean value and second moments of $Y_{[M]}$ provided in Appendix \ref{appendix:Ym}.
\end{IEEEproof}

\begin{figure}
    \centering
    \includegraphics[width=0.8\linewidth]{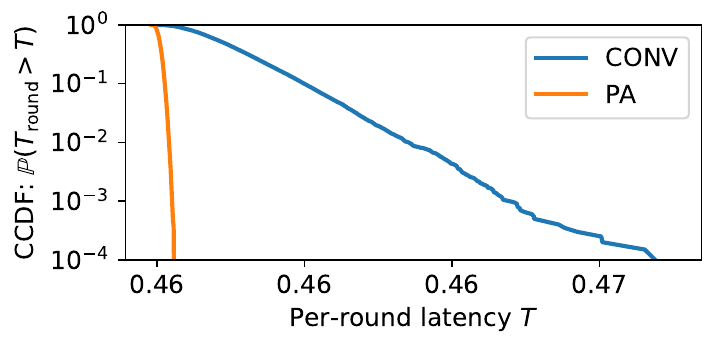}
    \caption{PA stochastically dominates $\mathrm{CONV}$ across thresholds: the PA complementary CDF (CCDF) lies strictly left/below $\mathrm{CONV}$ for all $T$.}
    \label{fig:fig_ccdf}
    \vspace{-10pt} 
\end{figure}

\subsection{PA: Bottleneck Distance via $m$-Spacings}
\label{subsec:PA}
Under PA, the straggler offset equals half the shortest window covering $M$ consecutive users:
\begin{equation}\label{eq:straggler_bottle}
x_{\rm str}^{\rm PA}=\frac{L_M}{2},\qquad
L_M:=\min_{1\le i\le K-M+1}\big\{X_{(i+M-1)}-X_{(i)}\big\}.
\end{equation}
These satisfy the deterministic ordering \(
x^{\mathrm{PA}}_{\mathrm{str}} = \frac{L_M}{2} \;\le\; Y_{[M]} = x^{\mathrm{CONV}}_{\mathrm{str}}
\), which ensures that any upper-tail bound for $Y_{[M]}$ applies to $x_{\rm str}^{\rm PA}$. To quantify PA’s intrinsic performance gain, we analyze $L_M$ directly using $m$-spacings with $m:=M-1\ge 1$.

Let $U_{(j)}:=\frac{X_{(j)}+\frac{D}{2}}{D}\in[0,1]$ denote the normalized order statistics. Define the simple spacings as $G_1:=U_{(1)}$, $G_j:=U_{(j)}-U_{(j-1)}$ for $2\le j\le K$, and $G_{K+1}:=1-U_{(K)}$.  By the theory of uniform order statistics \cite{pyke65}, $(G_1,\dots,G_{K+1})\sim\mathrm{Dirichlet}(1,\dots,1)$.
For $i=1,\ldots,K-m$, the $m$-span is defined as:
\[
S_i^{(m)}:=X_{(i+m)}-X_{(i)},\qquad 
\widetilde S_i^{(m)}:=\frac{S_i^{(m)}}{D}=U_{(i+m)}-U_{(i)},
\]
which further equals $\sum_{j=i+1}^{i+m}G_j$.
By Dirichlet aggregation properties, each fixed $m$-span follows:
\begin{equation}
\widetilde S_i^{(m)}\sim{\rm Beta}\big(m,K{+}1{-}m\big),\qquad 
\mathbb E[\widetilde S_i^{(m)}]=\frac{m}{K+1},
\label{eq:span-moments}
\end{equation}
with second moment $\mathbb E\!\big[(\widetilde S_i^{(m)})^{2}\big]=\frac{m(m+1)}{(K+1)(K+2)}$.

Let $L_M:=\min_{1\le i\le K-m}S_i^{(m)}$ and $\widetilde L_M:=\frac{L_M}{D}$, we have:
\begin{equation}
x_{\rm str}^{\rm PA}= \frac{L_M}{2}=\frac{D}{2}\,\widetilde L_M.
\label{eq:xstr-PA}
\end{equation}

Each interior spacing $G_j$ ($2\le j\le K$) appears in at most $m$ distinct $m$-spans. Summing all $m$-spans with multiplicities $0\le c_j\le m$ yields:
\[
\sum_{i=1}^{K-m}\widetilde S_i^{(m)}=\sum_{j=2}^{K} c_j G_j
\le m\sum_{j=2}^{K}G_j
= m\bigl(1-(G_1+G_{K+1})\bigr).
\]
Therefore,
\[
\widetilde L_M\le \frac{1}{K-m}\sum_{i=1}^{K-m}\widetilde S_i^{(m)}
\le \frac{m}{K-m}\quad\text{a.s.},
\]
leading to the upper bound:
\begin{equation}
\mathbb E\!\big[x_{\rm str}^{\rm PA\,2}\big]
=\frac{D^2}{4}\,\mathbb E[\widetilde L_M^{2}]
\ \le\ \frac{D^2}{4}\left(\frac{m}{K-m}\right)^{\!2}.
\label{eq:PA-UB-avg}
\end{equation}
Using \eqref{eq:span-moments} and the positive inequality $\widetilde L_M\le \widetilde S_i^{(m)}$ also yields:
\begin{equation}
\mathbb E\!\big[x_{\rm str}^{\rm PA\,2}\big]
\le \frac{D^2}{4}\,\frac{m(m+1)}{(K+1)(K+2)}.
\label{eq:PA-UB-beta}
\end{equation}
A uniformly valid upper bound over $(K,M)$ is $\min\{\text{\eqref{eq:PA-UB-avg}},\,\text{\eqref{eq:PA-UB-beta}}\}$.

\begin{proposition}[PA lower bound via the minimum simple spacing]
\label{prop:PA-LB}
Let $m:=M-1\ge1$. Since each $m$-span is a sum of $m$ positive spacings, $\widetilde L_M \ge m\min_{1\le j\le K+1}G_j$. Consequently,
\begin{equation}\label{eq:PA-LB}
\mathbb E\!\big[x_{\rm str}^{\rm PA\,2}\big]
\ \ge\ \frac{D^2}{4}\,m^2\,\frac{2}{(K+1)^3(K+2)}.
\end{equation}
\begin{IEEEproof}
Refer to Appendix~\ref{app:PA_LB_proof}.
\end{IEEEproof}
\end{proposition}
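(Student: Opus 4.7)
The inequality to establish is
\[
\mathbb E\!\big[x_{\rm str}^{\rm PA\,2}\big]\ \ge\ \frac{D^2}{4}\,m^2\,\frac{2}{(K+1)^3(K+2)},
\]
and the proposition already supplies the deterministic pointwise bound $\widetilde L_M\ge m\,G_{\min}$, where $G_{\min}:=\min_{1\le j\le K+1}G_j$. The plan is therefore to reduce the problem to computing the second moment of $G_{\min}$, which is the only nontrivial ingredient.

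First, I would square the pointwise bound (it is nonnegative a.s.\ since each $G_j\ge 0$) and take expectations to obtain $\mathbb E[\widetilde L_M^{2}]\ge m^{2}\,\mathbb E[G_{\min}^{2}]$, then insert this into \eqref{eq:xstr-PA} to get $\mathbb E[x_{\rm str}^{\rm PA\,2}]\ge \tfrac{D^{2}}{4}m^{2}\,\mathbb E[G_{\min}^{2}]$. The entire lemma then boils down to showing
\[
\mathbb E[G_{\min}^{2}]=\frac{2}{(K+1)^{3}(K+2)}.
\]

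To compute this, I would invoke the classical minimum-spacing tail identity for a uniform sample: since $(G_1,\dots,G_{K+1})\sim\mathrm{Dirichlet}(1,\dots,1)$, a standard argument (an inclusion--exclusion on $\{G_j>t\}$, or equivalently rescaling $G_j\mapsto G_j-t$ onto a simplex of side $1-(K+1)t$) yields
\[
\Pr\!\big(G_{\min}>t\big)=\bigl(1-(K+1)t\bigr)_{+}^{K},\qquad t\in\bigl[0,\tfrac{1}{K+1}\bigr].
\]
From the layer-cake representation $\mathbb E[G_{\min}^{2}]=2\int_{0}^{1/(K+1)} t\,(1-(K+1)t)^{K}\,dt$, the substitution $u=(K+1)t$ turns the integral into $\frac{2}{(K+1)^{2}}\,B(2,K+1)=\frac{2}{(K+1)^{2}}\cdot\frac{1}{(K+1)(K+2)}$, giving the stated value. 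Combining with the reduction above yields the claim.

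The only delicate point is the minimum-spacing distribution; the rest is routine algebra. I would cite Pyke's spacings survey \cite{pyke65} (already referenced in Section~\ref{subsec:PA}) for the tail formula, or, if self-containment is preferred, include a one-line geometric justification: the event $\{G_{\min}>t\}$ corresponds to the $K$ interior order statistics $U_{(1)},\dots,U_{(K)}$ lying in a shrunk simplex of volume $(1-(K+1)t)^{K}$ relative to the unit simplex, which coincides with the quoted probability. Beyond this, no inequality is lossy except the initial $\widetilde L_M\ge m\,G_{\min}$ supplied by the proposition itself, so the bound is as tight as that first step allows.
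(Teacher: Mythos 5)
Your proposal is correct and follows essentially the same route as the paper's Appendix~\ref{app:PA_LB_proof}: both reduce the claim via $\widetilde L_M\ge m\min_j G_j$ and rely on the survival function $\Pr(\min_j G_j>t)=(1-(K+1)t)^K$, arriving at $\mathbb E[\min_j G_j^{\,2}]=\frac{2}{(K+1)^3(K+2)}$ through a Beta integral. The only cosmetic difference is that you compute the second moment by the layer-cake (tail-integral) formula, whereas the paper first differentiates to obtain the density of the minimum spacing; the substance is identical.
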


Combining \eqref{eq:conv-YM-second-moment} with \eqref{eq:PA-UB-beta} and using instead \eqref{eq:PA-UB-avg} gives the sharper large‑$K$ limit
\begin{equation}
\limsup_{K\to\infty}\frac{\mathbb E\!\big[x_{\rm str}^{\rm PA\,2}\big]}{\mathbb E\!\big[Y_{[M]}^{2}\big]}
\ \le\ \frac{m^2}{M(M+1)}=\frac{(M-1)^2}{M(M+1)}\ <\ 1.
\label{eq:ratio-const-sharp}
\end{equation}
The $\mathrm{CONV}$ approach exhibits:
\begin{equation}
\mathbb E\!\big[Y_{[M]}^{2}\big]
= \Big(\tfrac{D}{2}\Big)^{2}\frac{M(M+1)}{(K+1)(K+2)}
= \Theta(K^{-2}),
\label{eq:conv-scaling}
\end{equation}
while \eqref{eq:PA-UB-beta} and \eqref{eq:PA-LB} imply that PA achieves:
\begin{equation}
\mathbb E\!\big[x_{\rm str}^{\rm PA\,2}\big]=O(K^{-2})
\quad\text{and}\quad 
\mathbb E\!\big[x_{\rm str}^{\rm PA\,2}\big]=\Omega(K^{-4}).
\label{eq:PA-scaling-bounds}
\end{equation}
See Appendix~\ref{app:PA_consequences} for a compact derivation.
This analysis demonstrates that PA achieves a strict constant-factor improvement in squared straggler distance and can exhibit superior scaling with $K$ compared to $\mathrm{CONV}$.

\begin{remark}
By turning the worst user–radiator offset from the $M$-th closest absolute position $Y_{[M]}$ into half an $m$-spacing $\frac{L_M}{2}$, PA reduces the critical path in SFL deterministically and raises per‑deadline feasibility in AFL. The improvement is scheduler‑agnostic and grows with $M$. 
\end{remark}

\section{Latency Advantages of Pinching Antennas}
\label{sec:latency}

Section~\ref{subsec:PA} established that in SFL, the PA’s bottleneck distance is reduced from $Y_{[M]}$ (for $\mathrm{CONV}$) to $\frac{L_M}{2}$ (for PASS), with $\frac{L_M}{2}\le Y_{[M]}$ and
corresponding moment bounds. This section quantifies the resulting latency improvement.

In SFL with $M$ users per round, the bottleneck offsets are
\(
x_{\rm CONV}^{\rm str}=Y_{[M]}\) and \(x_{\rm PA}^{\rm str}=\frac{L_M}{2}\) as derived in Section~\ref{subsec:PA}.

\subsection{SFL latency: PA dominates at all SNRs}
\label{subsec:SFL}

By \eqref{eq:straggler_bottle}, $\frac{L_M}{2}\le Y_{[M]}$ holds deterministically. Since $r\mapsto R(r)$ is strictly decreasing, this ordering is transferred directly to communication rates and latencies.

\begin{theorem}
\label{thm:SFL-dominance}
For any spatial realization,
$T^{\mathrm{SFL}}_{\mathrm{PA}}\ \le\ T^{\mathrm{SFL}}_{\mathrm{CONV}}$.
If $X$ has a continuous distribution, the inequalities are strict almost surely.
\end{theorem}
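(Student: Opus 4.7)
The plan is to lift the deterministic offset ordering from Section~\ref{subsec:PA} to the round latencies via strict monotonicity of the rate function, and then to show that the gap does not degenerate when $X$ is continuous. Since both CONV and PA schedule $M$ users per round and share the same payload constant $c$, the SFL round time in each system equals $c/R$ evaluated at the \emph{worst} scheduled link. Comparing round times therefore reduces to comparing worst offsets $|x-z|$.

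The bottleneck identification is already provided in Section~\ref{subsec:PA}: under CONV, the fixed radiator at $z=0$ combined with selecting the $M$ smallest $|X_i|$ gives a worst offset of $Y_{[M]}$; under PA, the tightest-$m$-span scheduling with $z^*$ at the window midpoint gives a worst offset of $L_M/2$. Equation~\eqref{eq:straggler_bottle} supplies the pointwise inequality $L_M/2 \le Y_{[M]}$. Writing $R(u)=\log_2\bigl(1+\mathcal{S}/(u^2+d^2)\bigr)$ with $u=|x-z|\ge 0$, the map $u\mapsto R(u)$ is strictly decreasing, so $u\mapsto \tau(u)=c/R(u)$ is strictly increasing. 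Applying $\tau(\cdot)$ to both sides of the offset inequality yields $T^{\rm SFL}_{\rm PA}=\tau(L_M/2)\le \tau(Y_{[M]})=T^{\rm SFL}_{\rm CONV}$ on every spatial realization.

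The substantive step is the strict inequality under continuous $X$, which reduces to $L_M/2<Y_{[M]}$ a.s. The key observation is that the $M$ users closest to the origin are necessarily consecutive in the sorted ordering (they split across the two sides of zero), so their spread equals one specific $m$-span and is bounded above by $2Y_{[M]}$. Equality would force the two extreme users of that block to satisfy $|X_{(i^*)}|=|X_{(i^*+m)}|=Y_{[M]}$; but for every pair $i\ne j$, $P(|X_i|=|X_j|)=0$ under a continuous common distribution, and a finite union of null events is null. Hence $L_M<2Y_{[M]}$ almost surely, and strict monotonicity of $\tau$ promotes this to $T^{\rm SFL}_{\rm PA}<T^{\rm SFL}_{\rm CONV}$ a.s. This absolute-value non-coincidence argument is the only part of the proof with any content; the weak inequality and the rate-to-latency transfer are mechanical consequences of Section~\ref{subsec:PA} and the monotonicity of $R$.
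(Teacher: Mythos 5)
Your proposal is correct and follows essentially the same route as the paper, which simply invokes the deterministic ordering $\frac{L_M}{2}\le Y_{[M]}$ from \eqref{eq:straggler_bottle} together with strict monotonicity of $r\mapsto R(r)$ to transfer the ordering to round latencies. Your consecutive-block argument bounding the $M$ closest users' span by $2Y_{[M]}$ and the absolute-value non-coincidence step ($\Pr\{|X_i|=|X_j|\}=0$) merely make explicit the almost-sure strictness that the paper leaves implicit (for $M\ge 2$, consistent with the paper's standing assumption $m=M-1\ge 1$).
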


\subsection{AFL latency: strictly positive PA gain with an explicit SNR threshold}
\label{subsec:AFL}

Define
$\Lambda:=\log_2 \mathcal{S}$,
\(
v(x):=\frac{x^2+d^2}{\mathcal{S}}=2^{-\Lambda}(x^2+d^2),\quad
\delta(x):=-\log_2(x^2{+}d^2)+\frac{1}{\ln 2}\,\ln\!\bigl(1+v(x)\bigr).
\)
Then
\[
R_0(x)=\Lambda-\log_2(x^2{+}d^2)+\frac{\ln(1+v(x))}{\ln 2}
=:\Lambda+\delta(x).
\]

Under $\mathrm{CONV}$, $x\sim\mathcal U[-\frac{D}{2},\frac{D}{2}]$ and $z=0$, yielding:
\begin{equation}
\label{eq:E-invR-CONV-AFL}
\mathbb{E}\!\Big[\frac{1}{R_{\rm CONV}}\Big]
=\frac{1}{\Lambda}+\frac{\bar\ell_{\rm CONV}}{\Lambda^2}+\mathbb{E}\big[\mathcal R_\Lambda(X)\big],
\end{equation}
where $R_\Lambda(x)$ denote the remainder of the $\frac{1}{\Lambda}$ expansion of $\frac{1}{R_0(x)}$, $\bar\ell_{\rm CONV}:=\frac{1}{D}\int_{-\frac{D}{2}}^{\frac{D}{2}}\log_2(x^2{+}d^2)\,dx$.
The closed-form expression is:
\begin{equation}
\label{eq:ell-CONV-closed}
\bar\ell_{\mathrm{CONV}}
= \log_2(d^2) + \frac{1}{\ln 2}\!\left(\ln(1+\zeta^2) - 2 + \frac{2}{\zeta}\arctan\zeta\right),
\end{equation}
where $\zeta$ denotes $\frac{D}{2d}$.

\noindent Under PA pinning, $x-z=0$, hence $R_{\rm PA}=R_0(0)=\log_2(1+\frac{\mathcal{S}}{d^2})$ and
\begin{equation}
\label{eq:E-invR-PA-AFL}
\mathbb{E}\!\Big[\frac{1}{R_{\rm PA}}\Big]=\frac{1}{R_0(0)}
=\frac{1}{\Lambda}+\frac{\log_2(d^2)}{\Lambda^2}+\mathcal R_\Lambda(0).
\end{equation}
Define
\(
g(\zeta):=\ln(1{+}\zeta^2)-2+\frac{2}{\zeta}\arctan \zeta\). Since $g'(\zeta)=\frac{2}{\zeta^2}\big(\zeta-\arctan\zeta\big)>0$, we have $g(\zeta)>0$ for all $\zeta >0$.

Subtracting \eqref{eq:E-invR-PA-AFL} from \eqref{eq:E-invR-CONV-AFL} and using
\eqref{eq:ell-CONV-closed} gives
\begin{equation}
\label{eq:AFL-gap-exact}
\mathbb{E}\!\Big[\tfrac{1}{R_{\rm CONV}}\Big]-\mathbb{E}\!\Big[\tfrac{1}{R_{\rm PA}}\Big]
=\frac{g(\zeta)}{\Lambda^2\ln 2}\ +\ \Big(\mathbb{E}[\mathcal R_\Lambda(X)]-\mathcal R_\Lambda(0)\Big).
\end{equation}
By Lemma~\ref{lem:uniform-bridge} in Appendix \ref{app:hsnr}, for $\Lambda\ge \Lambda_0$,
\begin{equation}
    \begin{aligned}
&\Big|\mathbb{E}[\mathcal R_\Lambda(X)]-\mathcal R_\Lambda(0)\Big|
\le 2\sup_{x\in[-\frac{D}{2},\frac{D}{2}]}|\mathcal R_\Lambda(x)| \\
& \qquad \qquad \qquad \qquad \qquad \le \frac{4\,(C_0+C_1 2^{-\Lambda})^2}{\Lambda^3}
+\frac{2\,C_1\,2^{-\Lambda}}{\Lambda^2}.
    \end{aligned}
\end{equation}
The AFL latency improvement over $K$ uploads:
\[
\Delta T_{\rm AFL}\ :=\ \frac{K B_t}{\Delta W}\Big(\mathbb{E}\big[\tfrac{1}{R_{\rm CONV}}\big]-\mathbb{E}\big[\tfrac{1}{R_{\rm PA}}\big]\Big),
\]
satisfies the explicit non‑asymptotic lower bound:
\begin{equation}
\label{eq:AFL-gap-lower}
\Delta T_{\rm AFL}\ \ge\ \frac{K B_t}{\Delta W}\left\{\frac{g(\zeta)}{\Lambda^2\ln 2}
-\frac{4\,(C_0+C_1 2^{-\Lambda})^2}{\Lambda^3}
-\frac{2\,C_1\,2^{-\Lambda}}{\Lambda^2}\right\}.
\end{equation}
By choosing the high-SNR threshold:
\begin{equation}
\label{eq:LambdaStar}
\Lambda_{\!*} := \max\!\left\{\Lambda_0,\ \frac{16(C_0{+}C_1)^2\ln 2}{g(\zeta)},\ \log_2\!\Big(\frac{8\,C_1\ln 2}{g(\zeta)}\Big),\ 1\right\}.
\end{equation}
we ensure that for every $\Lambda\ge \Lambda_{\!*}$,
\begin{equation}
\label{eq:AFL-gap-positive}
\Delta T_{\rm AFL}\ \ge\ \frac{K B_t}{\Delta W}\frac{g(\zeta)}{2\,\Lambda^2\ln 2}\ >\ 0.
\end{equation}
\begin{remark}
    In AFL, the explicit lower bound \eqref{eq:AFL-gap-positive} shows a strictly positive time saving above a computable SNR threshold $\Lambda^\star$, with leading term scaling as $\frac{K B_t}{\Delta W}\frac{g(\zeta)}{\Lambda^2\ln 2}$.
\end{remark}

\section{PA Yields More Participants under a Global Deadline}
\label{sec:pa-deadline-final}
This section analyzes system performance under a fixed wall-clock deadline $T_d$ for each communication round. A user successfully participates if and only if its local computation time $T_c$ plus uplink time $\tau(r)$ does not exceed the deadline $T_d$:
\begin{equation}
T_c + \tau\!\big(r(x,z)\big)\ \le\ T_d.
\label{eq:eligibility}
\end{equation}

Under the conventional antenna, a user at position $x$ has link distance \(r=\sqrt{x^2+d^2}\) and achieves successful participation with probability:
\[
p_{\rm conv}(x;T_d)=F_c\!\Big(T_d-\tau(\sqrt{x^2+d^2})\Big).
\]
Under PA tracking, the radiator pins to the user (\(x-z=0\Rightarrow r\equiv d\)), hence
\(
p_{\rm pa}(T_d)=F_c\!\big(T_d-\tau(d)\big).
\)
The expected participant counts are
\begin{equation}
N_{\rm CONV}(T_d)=K\!\int_{\mathbb{R}} f_X(x)\,p_{\rm conv}(x;T_d)\,dx,
\label{eq:N-general}
\end{equation}
where $f_X$ denotes the probability density function of user positions, and $N_{\rm PA}(T_d)=K\,F_c\!\big(T_d-\tau(d)\big)$.

\begin{theorem}
\label{thm:participation-dominance}
For any density \(f_X\), any nondecreasing \(F_c\), and any strictly increasing latency \(\tau(r)\) in the range \(r\),
\begin{equation}
N_{\rm PA}(T_d)\ \ge\ N_{\rm CONV}(T_d)\,
\label{eq:dominance}
\end{equation}
with \emph{strict} inequality whenever \(\mathbb{P}(|X|>0)>0\) and \(F_c\) is not constant on the entire interval:
\[
\mathcal I\ :=\ [\,T_d-\tau(r_{\max}),\ T_d-\tau(d)\,],\quad
r_{\max}:=\operatorname{ess\,sup}\sqrt{X^2+d^2}.
\]
\end{theorem}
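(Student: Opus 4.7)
The plan is a three-step chain of monotonicities: reduce to a pointwise inequality between the eligibility probabilities, integrate against $f_X$ to obtain \eqref{eq:dominance}, and then upgrade to strict inequality on an event of positive $f_X$-measure. Only three structural facts drive the argument: $\sqrt{x^2+d^2}\ge d$ with equality iff $x=0$, $\tau$ strictly increasing, and $F_c$ nondecreasing.

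For the pointwise step, I would chain the monotonicities: for every $x$, the bound $\sqrt{x^2+d^2}\ge d$ gives $\tau(\sqrt{x^2+d^2})\ge\tau(d)$ by strict monotonicity of $\tau$, hence $T_d-\tau(\sqrt{x^2+d^2})\le T_d-\tau(d)$, and the nondecreasing $F_c$ preserves this inequality to yield $p_{\rm conv}(x;T_d)\le p_{\rm pa}(T_d)$. Multiplying by $Kf_X(x)$ and integrating over $\mathbb{R}$ delivers \eqref{eq:dominance}.

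For strictness, write $\Delta(x):=p_{\rm pa}(T_d)-p_{\rm conv}(x;T_d)\ge 0$, so $N_{\rm PA}-N_{\rm CONV}=K\,\mathbb{E}[\Delta(X)]$, and it suffices to produce an event of positive $f_X$-measure on which $\Delta$ is bounded below by a positive constant. I would define $s^\star:=\inf\{s\in\mathcal I:F_c(s)=F_c(T_d-\tau(d))\}$; the hypothesis that $F_c$ is not constant on $\mathcal I$ forces $s^\star>T_d-\tau(r_{\max})$, so $r^\star:=\tau^{-1}(T_d-s^\star)\in[d,r_{\max})$ strictly. Any $\epsilon\in\bigl(\sqrt{(r^\star)^2-d^2},\,\operatorname{ess\,sup}|X|\bigr)$ then satisfies $\mathbb{P}(|X|\ge\epsilon)>0$ by the definition of essential supremum, and on this event $\sqrt{X^2+d^2}>r^\star$ gives $T_d-\tau(\sqrt{X^2+d^2})<s^\star$, whence $F_c(T_d-\tau(\sqrt{X^2+d^2}))<F_c(T_d-\tau(d))$ by the defining property of $s^\star$, forcing $\Delta(X)$ to be uniformly bounded below by a positive constant on $\{|X|\ge\epsilon\}$.

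The main obstacle is ensuring that the admissible window for $\epsilon$ is non-empty: the support constraint caps $\epsilon$ strictly below $\operatorname{ess\,sup}|X|$, while the $F_c$-variation constraint demands $\epsilon$ above $\sqrt{(r^\star)^2-d^2}$. Non-emptiness reduces to $r^\star<r_{\max}$, which follows from combining $\mathbb{P}(|X|>0)>0$ (so $r_{\max}>d$) with the non-constancy of $F_c$ on $\mathcal I$ (so $s^\star>T_d-\tau(r_{\max})$, hence $r^\star<r_{\max}$). Once this window is secured, the pointwise monotonicity chain closes mechanically and the integration step transfers the strict pointwise gap to a strict expected gap.
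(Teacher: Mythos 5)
Your proposal matches the paper's proof of \eqref{eq:dominance}: the same pointwise monotonicity chain ($\sqrt{x^2+d^2}\ge d$, $\tau$ strictly increasing, $F_c$ nondecreasing) followed by integration against $f_X$, and your strictness argument correctly supplies the detail that the paper only asserts. The only implicit ingredients are that $F_c$, being a CDF, is right-continuous (needed for your step that non-constancy on $\mathcal{I}$ forces $s^\star>T_d-\tau(r_{\max})$, which can fail for a merely nondecreasing $F_c$) and that $\tau$ is continuous so that $\tau^{-1}$ is well defined—standing assumptions the paper itself relies on.
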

\begin{proof}
For every \(x\), since \(\tau(d)\le \tau(\sqrt{x^2+d^2})\) and \(F_c\) is nondecreasing; hence \(F_c(T_d-\tau(d))\ge F_c(T_d-\tau(\sqrt{x^2+d^2}))\) pointwise. Integrating against \(f_X\) and multiplying by $K$ yields the result. Equality holds if and only if either \(\mathbb P(|X|>0)=0\) or \(F_c\) is constant on \(\mathcal I\); otherwise the inequality is strict.
\end{proof}

\vspace{-1em}
\subsection{Uniform corridor: closed form and near‑threshold law}
Consider \(X\sim\mathcal{U}[-\frac{D}{2},\frac{D}{2}]\) with deterministic computation time \(T_c\equiv t_0\). Define the deadline-limited coverage radius $\rho(T_d)$ by:
\begin{equation}
\tau\!\Big(\sqrt{\rho(T_d)^2+d^2}\Big)=T_d-t_0, \qquad T_d\ge t_0.
\label{eq:rhoT-def}
\end{equation}
A conventional user is eligible if and only if \(|x|\le \rho(T_d)\), yielding:
\begin{equation}
N_{\rm CONV}^{\rm (unif)}(T_d)=K \min\!\Big(\frac{2\rho(T_d)}{D},\,1\Big),
\label{eq:Nconv-unif}
\end{equation}
Recall \(c:=\frac{B_t}{\Delta W}\) and solve \eqref{eq:rhoT-def} yields:
\begin{equation}
\rho(T_d)=\sqrt{\,\frac{\mathcal{S}}{2^{\,B_t/((T_d-t_0)\Delta W)}-1}-d^2\,}\,,\quad
T_d\ge t_0+\tau(d).
\label{eq:rho-closed}
\end{equation}
The derivative with respect to deadline is:
\begin{equation}
\frac{d\rho}{dT}
=\frac{\mathcal{S}\,q(T)\,(\ln 2)\,c}{2\,\rho(T)\,\big(q(T)-1\big)^2\,(T-t_0)^2}\ >\ 0,
\label{eq:rho-derivative}
\end{equation}
where $q(T) = 2^{\frac{c}{T-t_0}}$, confirming that $\rho(T)$ increases strictly with $T$.
Define \(T_{\min}:=t_0+\tau(d)\) and \(\Lambda_d:=\log_2(1+\mathcal{S}/d^2)\), a Taylor expansion at \(T_{\min}\) yields the square‑root law:
\begin{equation}
\rho(T)=\kappa\,\sqrt{\,T-T_{\min}\,}+O(T-T_{\min}),
\label{eq:rho-sqrt}
\end{equation}
where $\kappa=\frac{d^2}{\sqrt{\mathcal{S}}}\sqrt{1+\frac{\mathcal{S}}{d^2}}\ \Lambda_d\ \sqrt{\frac{\ln 2}{c}}\ $.
Therefore, \(\rho(T)\) is concave near \(T_{\min}\). Define $\Delta N_{\rm unif}(T_d):=N_{\rm PA}(T_d)-N_{\rm CONV}^{\rm (unif)}(T_d)$. Consequently,
\begin{equation}
\Delta N_{\rm unif}(T_d)=
\begin{cases}
0, & T_d<T_{\min},\\[3pt]
K\left[1-\dfrac{2\rho(T_d)}{D}\right], & T_{\min}\le T_d<T_{\max},\\[8pt]
0, & T_d\ge T_{\max},
\end{cases}
\label{eq:gap-unif}
\end{equation}
with $T_{\max} := t_0+\tau\!\Big(\sqrt{d^2+(\frac{D}{2})^2}\Big)$.
Hence the gap \(\Delta N_{\rm unif}(T)\) is \emph{monotonically decreasing} in \(T\) on \([T_{\min},T_{\max})\) and, because \(\rho\) is concave near \(T_{\min}\), \(\Delta N_{\rm unif}\) is \emph{convex} near \(T_{\min}\).

\subsection{Impact of user distribution: stochastic dominance}
Since \(N_{\rm PA}(T_d)=K\,F_c(T_d-\tau(d))\) is distribution-independent, distributional effects manifest through:
\begin{equation}
N_{\rm CONV}(T_d)=K\,\mathbb E\!\left[F_c\!\big(T_d-\tau(\sqrt{X^2+d^2})\big)\right]
=K\,\mathbb E\![\,\phi(|X|)\,],
\label{eq:Nconv-general-FSD}
\end{equation}
with $\phi(y):=F_c\!\big(T_d-\tau(\sqrt{y^2+d^2})\big)$. The function $\phi$ is nonincreasing in $y$ since $r(y)$ and $\tau(r)$ are increasing while $F_c$ is nondecreasing.

\begin{proposition}
\label{prop:FSD}
If \(|X|_1\succeq_{\rm st}|X|_2\) (first‑order stochastic dominance), then for every deadline $T_d$,
\begin{equation}
\begin{aligned}  
&\Delta N_1(T_d):=N_{\rm PA}(T_d)-N_{\rm CONV}^{(1)}(T_d)\\
& \qquad \qquad \quad \ge\
N_{\rm PA}(T_d)-N_{\rm CONV}^{(2)}(T_d)=:\Delta N_2(T_d).
\end{aligned}
\end{equation}
\end{proposition}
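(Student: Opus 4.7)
The plan is to exploit that $N_{\rm PA}(T_d) = K\,F_c(T_d - \tau(d))$ is distribution-free: it depends only on the pinning distance $d$ and the CDF $F_c$, not on the law of $X$. Consequently,
\[
\Delta N_1(T_d) - \Delta N_2(T_d) \;=\; N_{\rm CONV}^{(2)}(T_d) - N_{\rm CONV}^{(1)}(T_d),
\]
so proving the proposition reduces to establishing $\mathbb{E}[\phi(|X|_1)] \le \mathbb{E}[\phi(|X|_2)]$, where $\phi(y) := F_c\!\big(T_d - \tau(\sqrt{y^2+d^2})\big)$ as in \eqref{eq:Nconv-general-FSD}.

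The key step will be to invoke the standard characterization of first-order stochastic dominance: $|X|_1 \succeq_{\rm st} |X|_2$ is equivalent to $\mathbb{E}[g(|X|_1)] \ge \mathbb{E}[g(|X|_2)]$ for every bounded nondecreasing $g$, and hence $\mathbb{E}[h(|X|_1)] \le \mathbb{E}[h(|X|_2)]$ for every bounded nonincreasing $h$ (by applying the nondecreasing version to $-h$). The paper already observes that $\phi$ is nonincreasing on $[0,\infty)$: the map $y\mapsto\sqrt{y^2+d^2}$ is increasing on $[0,\infty)$, $\tau$ is strictly increasing, and $F_c$ is nondecreasing, so $y\mapsto T_d-\tau(\sqrt{y^2+d^2})$ is decreasing and composing with the nondecreasing $F_c$ yields a nonincreasing $\phi$. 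Boundedness of $\phi\in[0,1]$ is automatic because $F_c$ is a CDF, so no separate integrability hypothesis is needed.

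Applying the characterization with $h=\phi$ then gives $N_{\rm CONV}^{(1)}(T_d) \le N_{\rm CONV}^{(2)}(T_d)$, which combined with the distribution-independence of $N_{\rm PA}(T_d)$ delivers $\Delta N_1(T_d) \ge \Delta N_2(T_d)$ for every deadline.

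The main obstacle here is bookkeeping rather than analysis: once one recognizes that the PA term cancels out of the difference and that $\phi$ is monotone, the statement is a direct application of the textbook FSD lemma. The only place one needs to be careful is the monotonicity of the composite $\phi$, which hinges on the already-assumed strict monotonicity of $\tau$ and the nondecreasing property of $F_c$. If a quantitative strict-inequality refinement were desired, one could instead argue via the layer-cake identity $\mathbb{E}[\phi(|X|_i)] = \phi(\infty) + \int_0^\infty \bar F_{|X|_i}(y)\,d(-\phi)(y)$ and use $\bar F_{|X|_1}(y)\ge \bar F_{|X|_2}(y)$, with strictness controlled by where the two survival functions separate and $\phi$ is strictly decreasing; this parallels the strictness condition already used in Theorem~\ref{thm:participation-dominance}.
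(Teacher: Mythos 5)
Your proposal is correct and follows essentially the same route as the paper: it uses the distribution-independence of $N_{\rm PA}(T_d)$ to reduce the claim to comparing $\mathbb{E}[\phi(|X|_1)]$ with $\mathbb{E}[\phi(|X|_2)]$, and then applies the standard first-order stochastic dominance characterization to the nonincreasing map $\phi$. Your write-up merely makes explicit the boundedness of $\phi$ and the monotonicity chain that the paper's one-line proof takes for granted.
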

\begin{proof}
Since $\phi$ nonincreasing, first-order stochastic dominance implies \(\mathbb E[\phi(|X|_1)]\le \mathbb E[\phi(|X|_2)]\). Therefore, greater dispersion in user distribution leads to larger PA advantages.
\end{proof}

\subsection{Symmetric Gaussian mixture: explicit formula and tails}
Consider a symmetric Gaussian mixture:
\begin{equation}
f_X(x)=\tfrac{1}{2}\varphi_\sigma(x-\mu)+\tfrac{1}{2}\varphi_\sigma(x+\mu),
\quad
\varphi_\sigma(x)=\frac{1}{\sqrt{2\pi}\sigma}e^{\frac{-x^2}{2\sigma^2}}.
\label{eq:gm}
\end{equation}
Under the hard deadline \(T_c\equiv t_0\), the success probability of $\mathrm{CONV}$ equals the probability mass in \([-\rho(T_d),\rho(T_d)]\):
\begin{equation}
\begin{aligned}
\mathbb P\!\big(|X|\le \rho(T_d)\big)
&=\tfrac{1}{2}\!\left[\Phi\!\Big(\tfrac{\rho(T_d)-\mu}{\sigma}\Big)-\Phi\!\Big(\tfrac{-\rho(T_d)-\mu}{\sigma}\Big)\right]\\
&\quad+\tfrac{1}{2}\!\left[\Phi\!\Big(\tfrac{\rho(T_d)+\mu}{\sigma}\Big)-\Phi\!\Big(\tfrac{-\rho(T_d)+\mu}{\sigma}\Big)\right],
\end{aligned}
\label{eq:Pgm}
\end{equation}
where \(\Phi\) denotes the standard normal cumulative distribution function. The participation advantage is:
\begin{equation}
\Delta N_{\rm gm}(T_d)=K\cdot \mathbf{1}_{\{T_d\ge t_0+\tau(d)\}}-K\cdot \mathbb P\!\big(|X|\le \rho(T_d)\big).
\label{eq:gap-gm}
\end{equation}

\begin{figure}
    \vspace{-6pt} 
    \centering
    \includegraphics[width=0.8\linewidth]{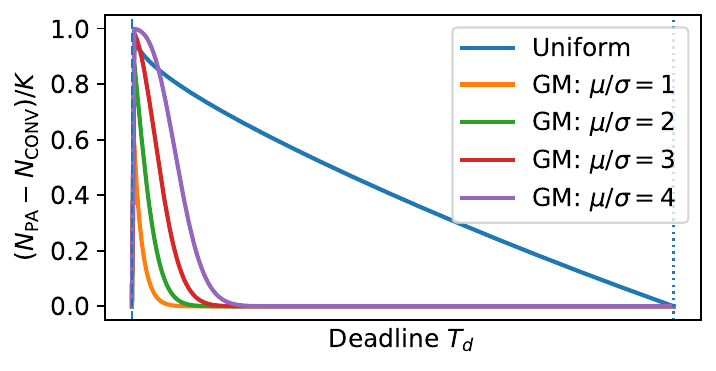}
    \caption{Normalized participation advantage $N_{\rm PA}-N_{\rm CONV})/K$ vs. deadline $T_d$. }
    \label{fig:fig_participant}
    \vspace{-10pt} 
\end{figure}

A sufficient condition for \(|X|_{\rm gm}\succeq_{\rm st}|X|_{\rm unif}\) on \([0,\frac{D}{2}]\) is that the clusters sit sufficiently outside the corridor, e.g. \(\mu-a\sigma\ge \frac{D}{2}\) for some \(a>0\); then \(P_{\rm gm}(|X|>z)\ge 1-2z/D\) for all \(z\in[0,\frac{D}{2}]\). More generally, without asserting FSD, we can compare gaps directly using Mills’ ratio. For \(\rho\in[0,\mu)\),
\begin{equation}
\mathbb{P}_{\rm gm}\!\big(|X|\le \rho\big)
\ \le\
\frac{\sigma}{\sqrt{2\pi}}\left(\frac{e^{-\frac{(\mu-\rho)^2}{2\sigma^2}}}{\mu-\rho}
+\frac{e^{-\frac{(\mu+\rho)^2}{2\sigma^2}}}{\mu+\rho}\right).
\label{eq:Mills-mixture}
\end{equation}
A sufficient condition ensuring larger Gaussian mixture advantages than uniform distribution is:
\begin{equation}
\frac{\sigma}{\sqrt{2\pi}}\!\left(\frac{e^{-\frac{(\mu-\rho(T_d))^2}{2\sigma^2}}}{\mu-\rho(T_d)}
+\frac{e^{-\frac{(\mu+\rho(T_d))^2}{2\sigma^2}}}{\mu+\rho(T_d)}\right)\ \le\ \frac{2\rho(T_d)}{D}.
\label{eq:GM-vs-unif-condition}
\end{equation}
When \(\mu/\sigma\) is large and \(\rho(T_d)\ll \mu\), the left side is exponentially small, so \eqref{eq:GM-vs-unif-condition} holds and \(\Delta N_{\rm gm}(T_d)\ge \Delta N_{\rm unif}(T_d)\).
It is worth mentioning that we focus on uplink; downlink latency is identical across architectures and thus does not affect ordering.

\begin{remark}
    Because $N_{\mathrm{PA}}(T_d)=K\,F_c(T_d-\tau(d))$ is distribution‑free, the entire dependence on user geography sits in $N_{\mathrm{CONV}}(T_d)$ shown in \eqref{eq:Nconv-general-FSD}. More spread only increases the PA–CONV gap, and closed‑forms under uniform corridors reveal a near‑threshold $\sqrt{\,T-T_{\min}}$ law.
\end{remark}
\section{Convergence under PHY‑aware Sampling with Error Feedback}

This section presents a comprehensive convergence analysis for FL systems incorporating physical layer-aware user sampling and error-feedback compression mechanisms. The analysis establishes the theoretical foundation for understanding the interplay between wireless communication constraints and distributed learning performance.

The physical layer enters the FL analysis through two quantities: (i) the minimum inclusion probability and (ii) the compression fidelity. We model them as follows:

\begin{subequations}\label{eq:incprob}
\begin{align}
\pi_{i,t} &= p_s\,F_c\!\big(T_d-\tau_{i,t}\big), \label{eq:incprob_indiv}\\
\pi_{\min,t} &:= \min_{1\le i\le K}\,\pi_{i,t}. \label{eq:incprob_min}
\end{align}
\end{subequations}
\noindent
Here $p_s\in(0,1]$ is the Bernoulli triggering probability, $F_c$ is the CDF of the compute time $T_{c,i}$, and
\(
\tau_{i,t}=\frac{B_t}{W\,R_{i,t}}
\)
is the uplink time with bandwidth $W$ and spectral efficiency $R_{i,t}$.

\begin{subequations}\label{eq:compfid}
\begin{align}
1-\alpha(b_\star) &\;\lesssim\; c_q\,2^{-2b_\star}, \label{eq:compfid_bound}\\
b_t &\equiv b_\star,\quad B_t=d_w\,b_\star\ \ \text{for all } t. \label{eq:compfid_fixed}
\end{align}
\end{subequations}
\noindent
where $\alpha(\cdot)$ denotes the fidelity of the vector quantizer \cite{Vec_quanti_signal_com}, $b_\star\in\mathbb{N}$ is the fixed per-parameter bit budget chosen at design time, $c_q>0$ is the (source/quantizer–dependent) high-rate constant.
The service-rate target $R_{{\rm svc},t}$ is a design-time choice (e.g., $R(d)$ under PA) and does not affect $b_t$.

\subsection{PHY-aware Poisson Sampling}
The sampling mechanism employs a two-stage gating process that accounts for both computational and communication constraints:
\begin{equation}
E_{i,t}:=\mathbf 1\{T_{c,i}\le T_d-\tau_{i,t}\},\quad
Z_{i,t}\sim\mathrm{Bernoulli}(p_s),
\end{equation}
where $E_{i,t}$ denotes the computational feasibility indicator, $Z_{i,t}$ denotes the random participation trigger, and both processes are mutually independent and independent of the gradient computation and compression randomness, conditioned on $\mathcal F_t$.
The actual participation indicator can be defined as:
\begin{equation}
I_{i,t}=E_{i,t}Z_{i,t}. 
\end{equation}
\begin{assumption}
Conditioned on $\cF_t$, the pairs $\{(E_{i,t},Z_{i,t})\}_{i=1}^K$ are mutually independent across $i$, 
and independent of the compression randomness and stochastic gradients. 
Consequently, $\{I_{i,t}\}_{i=1}^K$ are mutually independent with $I_{i,t}\sim\mathrm{Bernoulli}(\pi_{i,t})$ given $\cF_t$.
\end{assumption}

Under the constraint $R_{i,t} \ge R_{\min,t}$,
the Horvitz–Thompson inclusion probability unconditional on $E_{i,t}$, can be expressed as follows:
\begin{equation}
\label{eq:pi-it-correct}
\pi_{i,t}:=\Pr\{I_{i,t}=1\mid\mathcal F_t\}
= p_s\,F_c\!\big(T_d-\tau_{i,t}\big).
\end{equation}

To quantify the impact of user participation, the sampling-noise amplification factor is defined as:
\begin{equation}
    \Xi^{\mathrm{safe}}_t=\frac{K-1+1/\pi_{\min,t}}{K},
\end{equation}
If $\pi_{\min,t}\ge\pi_0>0$, then $\Xi^{\mathrm{safe}}_t\le\frac{K-1+1/\pi_0}{K}$.
Note that $\tau_{i,t}$ is $\cF_t$-measurable and conditioned on $\cF_t$, $T_{c,i}$ is independent of $(\vg_{i,t},\ve_{i,t})$ and of $\tau_{i,t}$. Hence $p_{c,i,t}=\Pr\{E_{i,t}=1\mid\cF_t\}
=\Pr\{T_{c,i}\le T_d-\tau_{i,t}\mid\cF_t\} =F_c\!\big(T_d-\tau_{i,t}\big)$, and $\pi_{i,t}=p_s\,p_{c,i,t}$.

\subsection{Compressor and Error Feedback Mechanism}
\subsubsection{Compressor Properties}
We consider a compression operator $Q_b(\cdot)$ that encodes $v\in \mathbb{R}^d$ using $b$ bits per coordinate. We assume that $Q_b$ satisfies a contractive mean-squared error property, i.e., there is a function $\alpha(b)\in(0,1]$ such that
\begin{equation}\label{EE_aggregated}
\E\!\left[\|\cQ_b(\vv)-\vv\|^2\mid\vv\right]\le (1-\alpha(b))\|\vv\|^2,\quad \alpha(b)\in(0,1].  
\end{equation}
The compression leaves at most a $(1-\alpha(b))$ fraction of the squared norm as error on average. For example, under a high-rate uniform quantizer, one typically has $\alpha(b) \approx 1 - c2^{-2b}$, meaning $1-\alpha(b)$ (the error fraction) scales $2^{-2b}$ \cite{NLP_cite}. This mean‑squared contraction model for $Q_b$ is standard in compressed SGD analyzes.

\subsubsection{Error Feedback Update} 

The error‑feedback recursion is a canonical device to keep compression bias bounded and to make the effective update unbiased in aggregation. 

At every round $t$ and for every user $i$, the EF recursion is executed locally, regardless of the sampling gates $(E_{i,t},Z_{i,t})$. The gates only determine which compressed updates are transmitted and aggregated \cite{error_SGD}:
\begin{align}
\ve_{i,t+1}=\vg_{i_t}+\ve_{i,t}-\cQ_{b_t}(\vg_{i,t}+\ve_{i,t}),
\end{align}
Each user computes a local stochastic gradient every global tick, the EF recursion \eqref{EE_aggregated} is executed locally regardless of gates. The gate only decides transmission, i.e., if $I_{i,t} = 0$, the compressed vector is not sent, but the EF residual updates.

\subsection{Horvitz-Thompson Aggregation Analysis}\label{subsec:HT}
\subsubsection{Aggregation Scheme}
\noindent In each round, the server computes an unbiased aggregate of the received updates using a Horvitz–Thompson estimator to account for non-uniform sampling probabilities. The aggregated gradient estimate is defined as:
\begin{equation}
\label{eq:ht}
\widehat{\vg_t}=\frac{1}{K}\sum_{i=1}^K \frac{I_{i,t}}{\pi_{i,t}}\,Y_{i,t},
\end{equation}
where $Y_{i,t}:=\cQ_{b_t}(\vg_{i,t}+\ve_{i,t})=\vg_{i,t}+\ve_{i,t}-\ve_{i,t+1}$ is the compressed, error-adjusted update from user $i$. 
This estimator $\widehat{\vg_t}$ aims to approximate the true gradient $\nabla F(w_t)$ related to the dynamics of the error feedback residuals.
Conditioned on $\cF_t$, the indicators $\{I_{i,t}\}_{i=1}^K$ are mutually independent with
$I_{i,t}\sim\mathrm{Bernoulli}(\pi_{i,t})$,  where
\[
\pi_{i,t}=p_{c,i,t}\,p_s,\qquad
p_{c,i,t}:=\Pr\{E_{i,t}=1\mid\cF_t\}.
\]
We assume that $\pi_{i,t}$ is known to the server, e.g., via calibrated $F_c$ and measured $\tau_{i,t}$.
The conditional expectation becomes
\begin{equation}
\begin{aligned}
& \E\!\big[\widehat{\vg_t}\mid \cF_t\big]
=\frac{1}{K}\sum_{i=1}^K \E\!\Big[\tfrac{I_{i,t}}{\pi_{i,t}}Y_{i,t}\,\Big|\,\cF_t\Big] =\frac{1}{K}\sum_{i=1}^K \E[Y_{i,t}\mid\cF_t]\\
& \qquad \qquad \quad =\nabla F(\vw_t)+\tilde{\ve_t}-\tilde{\ve}^{\,t+1\mid t}.
\end{aligned}
\label{eq:mean}
\end{equation}
where
\(
\tilde{\ve_t}:=\frac{1}{K}\sum_i \E[\ve_{i,t}\mid\cF_t]
\)
and
\(
\tilde{\ve}^{\,t+1\mid t}:=\frac{1}{K}\sum_i \E[\ve_{i, t+1}\mid\cF_t]
\).
If \(\cQ_b\) were mean‑unbiased or EF residuals vanished, the conditional mean would equal \(\nabla F(\vw_t)\).
The analysis does not require $\cQ_b$ to be mean-unbiased. Given \eqref{eq:ht}, we have
\[
\E[Y_{i,t}\mid\cF_t]=\nabla f_i(\vw_t)+\E[\ve_{i,t}-\ve_{i, t+1}\mid\cF_t].
\]
As seen above, $ \mathbb{E}[Y_{i,t} \mid \mathcal{F}_t]$ naturally includes the error feedback terms. This means even if $Q_b$ is biased, the EF mechanism keeps the bias bounded (we will quantify this as an injection floor in the analysis). If $Q_b$ were mean-unbiased (i.e. $\mathbb{E}[Q_b(v)\mid v]=v$), then we would directly have $\mathbb{E}[\tilde{g}_t|\mathcal{F}_t]=\nabla F(w_t)$. In the case with a possibly biased compressor, \eqref{eq:mean} shows a small bias term $\tilde{e}_t - \tilde{e}^{t+1|t}$ remains.

\subsubsection{Conditional Second Moment}

Under the mutual independence of $\{I_{i,t}\}$ and independence from $\{Y_{i,t}\}$, given $\cF_t$, the exact conditional second moment is
\begin{equation}
\begin{aligned}
&\E\!\left[\|\widehat{\vg_t}\|^2\ \Big|\ \{Y_{i,t}\},\cF_t\right]
=\frac{1}{K^2}\Big\|\sum_{i=1}^K Y_{i,t}\Big\|^2 \\
& \qquad \qquad \qquad \qquad \qquad +\frac{1}{K^2}\sum_{i=1}^K\Big(\frac{1}{\pi_{i,t}}-1\Big)\|Y_{i,t}\|^2.
\end{aligned}
\end{equation}
Applying the inequality $\|\sum_i \mathbf y_i\|^2\le K\sum_i\|\mathbf y_i\|^2$ and taking expectations yields 
\begin{equation}
\E\!\left[\|\widehat{\vg_t}\|^2\mid\cF_t\right]
\le \underbrace{\frac{K-1+1/\pi_{\min,t}}{K}}_{:=\,\Xi^{\rm safe}_t}\cdot \frac{1}{K}\sum_{i=1}^K \E\big[\|Y_{i,t}\|^2\mid\cF_t\big].
\end{equation}
Using $Y_{i,t}=\vg_{i,t}+\ve_{i,t}-\ve_{i, t+1}$ and $\|a+b-c\|^2\le 3(\|a\|^2+\|b\|^2+\|c\|^2)$, we obtain
\begin{equation}\label{eq:HT2}
\begin{aligned}
&\E\!\left[\|\widehat{\vg_t}\|^2\mid\cF_t\right]
\le 3\,\Xi^{\rm safe}_t\Big(\|\nabla F(\vw_t)\|^2+\delta^2+\sigma^2\Big) \\
& \qquad \qquad \qquad \quad +3\,\Xi^{\rm safe}_t\Big(\cE^t_{\rm cond}+\bar{\cE}^{\,t+1\mid t}\Big),
\end{aligned}
\end{equation}
where $\cE^t_{\rm cond}:=\frac{1}{K}\sum_{i=1}^K \E\big[\|\ve_{i,t}\|^2\mid\cF_t\big]$ and $\bar{\cE}^{\,t+1\mid t}:=\frac{1}{K}\sum_{i=1}^K \E\big[\|\ve_{i, t+1}\|^2\mid\cF_t\big]$. 

\begin{assumption}
There exists $G^2<\infty$ such that $\sup_t \E[G_t^2]\le G^2$, where $G_t^2=\frac{1}{K}\sum_{i=1}^K\|\nabla f_i(\vw_t)\|^2$.
\end{assumption}
Since \(\E[\vg_{i,t}\mid\cF_t]=\nabla f_i(\vw_t)\) and \(\E[\|\vg_{i,t}-\nabla f_i(\vw_t)\|^2\mid\cF_t]\le\sigma^2\), we have
\begin{equation}
\E\!\big[\|\vg_{i,t}\|^2\mid\cF_t\big]\le\ \big\|\nabla f_i(\vw_t)\big\|^2+\sigma^2.
\end{equation}
Averaging over all users yields
\begin{equation}
\label{eq:Gt_proxy_bound}
\frac{1}{K}\sum_{i=1}^K \E \big[\|\vg_{i,t}\|^2\mid\cF_t\big]
\ \le\ G_t^2+\sigma^2.
\end{equation}
Moreover, using the heterogeneity bound
\(
\frac{1}{K}\sum_i \|\nabla f_i(\vw)-\nabla F(\vw)\|^2\le \delta^2
\),
then
\(
G_t^2 = \|\nabla F(\vw_t)\|^2\ +\ \frac{1}{K}\sum_{i=1}^K \|\nabla f_i(\vw_t)-\nabla F(\vw_t)\|^2, 
\)
we have
\begin{equation}\label{eq:Gt_vs_global}
G_t^2  \le \|\nabla F(\vw_t)\|^2+\delta^2.
\end{equation}

\subsection{EF Residual Dynamics}

The error feedback mechanism induces a recurrence relation for the residual norms. Applying the mean-squared error bound with $v = g_{i,t} + e_i^t$, we obtain
$$
\E\!\left[\|\ve_{i, t+1}\|^2\mid \cF_t\right]
\le (1-\alpha(b_t))\,\E\!\left[\|v\|^2\mid\cF_t\right].
$$
For any $c>0$, applying the inequality $\|a+b\|^2\le (1+c)\|a\|^2+(1+1/c)\|b\|^2$ with $a=\ve_{i,t}$, $b=\vg_{i,t}$ and taking $\E[\cdot\mid\cF_t]$,

\begin{equation}
\begin{aligned}
&\E\!\left[\|\ve_{i, t+1}\|^2\mid \cF_t\right]
\le \underbrace{(1-\alpha(b_t))(1+c)}_{:=\,\rho(b_t)}\,\E\!\left[\|\ve_{i,t}\|^2\mid\cF_t\right] \\
& \qquad \qquad \qquad \quad+\underbrace{(1-\alpha(b_t))(1+1/c)}_{:=\,c_1(1-\alpha(b_t))}\,\E\!\left[\|\vg_{i,t}\|^2\mid\cF_t\right].
\end{aligned}
\end{equation}
Setting $c:=\alpha(b_t)/2$ ensures $\rho(b_t)=(1-\alpha(b_t))\Bigl(1+\tfrac{\alpha(b_t)}{2}\Bigr)
=1-\tfrac{\alpha(b_t)}{2}-\tfrac{\alpha(b_t)^2}{2} < 1$. 
Averaging over users yields the conditional update
\begin{equation}\label{eq:EF-cond}
\bar{\cE}^{\,t+1\mid t}
\le \rho(b_t)\,\cE^t_{\rm cond}
+ c_1(1-\alpha(b_t))\,\big(G_t^2+\sigma^2\big).    
\end{equation}
Assume that there exists $G^2$ such that \(\sup_t \E[G_t^2]\le G^2\). Taking total expectations and using uniform bounds yields
\begin{equation}
\label{eq:EF-tot}
\cE^{t+1}\le \rho_{\max}\,\cE^t + c_1(1-\alpha)_{\max}\big(G^2+\sigma^2\big),
\end{equation}
where \(\cE^t:=\E[\cE^t_{\rm cond}]\), \(\rho_{\max}:=\sup_t \rho(b_t)<1\),
and \((1-\alpha)_{\max}:=\sup_t (1-\alpha(b_t))\).

\subsection{One‑Step Descent and Lyapunov Construction}
By $L$-smoothness of the objective function, we have
\begin{equation}
\label{eq:onestep}
\begin{aligned}
\E\!\left[F(\vw_{t+1})\mid\cF_t\right]
&\le F(\vw_t)-\eta\,\E\!\left[\langle\nabla F(\vw_t),\widehat{\vg_t}\rangle\mid\cF_t\right]
\\
&\quad +\frac{L\eta^2}{2}\,\E\!\left[\|\widehat{\vg_t}\|^2\mid\cF_t\right].
\end{aligned}
\end{equation}
Using \eqref{eq:mean} and Young’s inequality with parameter \(\beta=L\eta\),
\[
-\eta\langle\nabla F(\vw_t),\tilde{\ve_t}-\tilde{\ve}^{\,t+1\mid t}\rangle
\le \frac{L\eta^2}{2}\|\nabla F(\vw_t)\|^2
+\frac{1}{2L}\|\tilde{\ve_t}-\tilde{\ve}^{\,t+1\mid t}\|^2,
\]
with Jensen and \(\|a-b\|^2\le 2\|a\|^2+2\|b\|^2\) give
\begin{equation}
\label{eq:Deltae}
\|\tilde{\ve_t}-\tilde{\ve}^{\,t+1\mid t}\|^2
\le 2\cE^t_{\rm cond}+2\bar{\cE}^{\,t+1\mid t}.
\end{equation}
Combining \eqref{eq:HT2}–\eqref{eq:Deltae} with \eqref{eq:onestep} yields
\begin{equation}
\label{eq:onestep-raw}
\begin{aligned}
\E[F(\vw_{t+1})\mid\cF_t]
&\le F(\vw_t) - \eta\gamma_t\|\nabla F(\vw_t)\|^2
\\
&\quad + A_t\big(\cE^t_{\rm cond}+\bar{\cE}^{\,t+1\mid t}\big) \\
& \qquad + \frac{3}{2}L\eta^2\Xi^{\rm safe}_t(\sigma^2+\delta^2),
\end{aligned}
\end{equation}
where $A_t:=\tfrac{1}{L}+\tfrac{3}{2}L\eta^2\Xi^{\rm safe}_t$ and $\gamma_t:=1-\tfrac{L\eta}{2}(1+3\Xi^{\rm safe}_t)$.
\subsection{Lyapunov Function Construction}
Define the Lyapunov potential $\Psi^t:=\E[F(\vw_t)]+\lambda\,\cE^t$ with $\lambda>0$. 
Taking total expectations of \eqref{eq:onestep-raw} and using $\E[\bar{\cE}^{\,t+1\mid t}]=\cE^{t+1}$, we obtain
\begin{equation}\label{eq:Lyap}
\begin{aligned}
&\Psi^{t+1} \le \Psi^{t} -\eta\,\E[\gamma_t\|\nabla F(\vw_t)\|^2]
+ \tfrac{3}{2}L\eta^2\E[\Xi^{\rm safe}_t](\sigma^2+\delta^2) \\
& \qquad\quad + \underbrace{\big(-\lambda(1-\rho_{\max})+A^+ (1+\rho_{\max})\big)}_{:=\ \mathsf{Coeff}_{\cE}}\,\cE^t \\
& \qquad \qquad + \underbrace{(\lambda+A^+)\,c_1(1-\alpha)_{\max}(G^2+\sigma^2)}_{\text{EF‑injection floor}},
\end{aligned}
\end{equation}
where $A^+:=\sup_t A_t\le \tfrac{1}{L}+\tfrac{3}{2}L\eta^2\,\Xi^{\rm safe}$.

\vspace{-0.5em}
\subsection{Convergence Conditions}
If the stepsize satisfies
\(
\eta\le \dfrac{1}{L\big(1+3\Xi^{\rm safe}\big)},
\)
then $\gamma_t\ge \tfrac12$, ensuring that the gradient term contributes $-\tfrac{\eta}{2}\E\|\nabla F(\vw_t)\|^2$.
The factor $(1+3\Xi^{\rm safe})$ arises from the second–moment control in \eqref{eq:HT2} and the Young–type splitting of cross terms involving the EF residuals.

With \(\rho_{\max}= \big(1-\alpha(b_\star)\big)\Big(1+\tfrac{\alpha(b_\star)}{2}\Big)\),
the choice
\begin{equation}\label{eq:lambda-cond}
    \lambda\ge \dfrac{A^+(1+\rho_{\max})}{1-\rho_{\max}}
\end{equation}
 is feasible.
Then $\mathsf{Coeff}_{\cE}\le 0$ and the residual energy is absorbed. Under conditions above, the Lyapunov function satisfies

\begin{equation}\label{eq:onestep-final}
\begin{aligned}
&\Psi^{t+1}
\le \Psi^t
-\frac{\eta}{2}\,\E\|\nabla F(\vw_t)\|^2
+\underbrace{\frac{3}{2}L\eta^2\,\Xi^{\rm safe}(\sigma^2+\delta^2)}_{\text{variance floor}} \\
& \qquad \quad +\underbrace{(\lambda+A^+)\,c_1(1-\alpha)_{\max}(G^2+\sigma^2)}_{\text{EF injection floor}}.
\end{aligned}
\end{equation}
Three forces govern each step: 
(i) a descent term $-\tfrac{\eta}{2}\E\|\nabla F(\vw_t)\|^2$; 
(ii) a \emph{variance floor} $\propto \Xi^{\rm safe}$ due to sampling and data heterogeneity ($\sigma^2,\delta^2$); 
(iii) an \emph{EF floor} from finite-bit compression. By enlarging the eligible set (Sec.~\ref{sec:pa-deadline-final}), PA increases $\pi_{\min,t}$, thus \emph{reduces} $\Xi^{\rm safe}$ and both floors, accelerating convergence.

\subsection{Convergence Rates and Wall-Clock Time}
Averaging \eqref{eq:onestep-final} over \(t=0,\ldots,T-1\) yields
\begin{equation}
\begin{aligned}
&\frac{1}{T}\sum_{t=0}^{T-1}\E\|\nabla F(\vw_t)\|^2
\le \frac{2(\Psi^0-\Psi^T)}{\eta T}
+3L\eta\,\Xi^{\rm safe}(\sigma^2+\delta^2) \\
&\qquad \qquad \qquad\qquad\qquad+\frac{2}{\eta}\,(\lambda+A^+)\,c_1(1-\alpha)_{\max}(G^2+\sigma^2).
\end{aligned}
\end{equation}

Under the \(\mu\)‑PL condition \(\frac{1}{2}\|\nabla F(\vw)\|^2\ge \mu(F(\vw)-F^\star)\), \eqref{eq:onestep-final} implies,

\begin{equation}
\begin{aligned}
&\Psi^{t+1}-\Psi^\star
\le (1-\eta\mu)\,(\Psi^t-\Psi^\star)\\
&  \qquad \qquad \quad+\underbrace{\big(-\lambda(1-\rho_{\max})+A^+(1+\rho_{\max})+\eta\mu \lambda\big)}_{:=\,\mathsf{Coeff}^{\rm(PL)}_{\cE}}\,\cE^t\\
& \qquad \qquad \qquad +\text{(variance floor)}+\text{(EF floor)}.
\end{aligned}
\end{equation}
To ensure $\mathsf{Coeff}^{\rm(PL)}_{\cE}\le 0$, we require
$$
\lambda\big((1-\rho_{\max})-\eta\mu\big)\ \ge\ A^+(1+\rho_{\max})\Longrightarrow
\lambda \ge \frac{A^+(1+\rho_{\max})}{1-\rho_{\max}-\eta\mu},
$$
assuming $\eta\mu<1-\rho_{\max}$. This yields the convergence guarantee
\begin{equation}\label{eq:pl_linear}
\begin{aligned}
&\E[F(\vw_t)-F^\star]+\lambda\,\cE^t \le
(1-\eta\mu)^t\big(\Psi^0-\Psi^\star\big) \\
& \qquad \qquad \qquad \qquad \qquad+\frac{3L\eta}{2\mu}\,\Xi^{\rm safe}(\sigma^2+\delta^2) \\
&\qquad \qquad \qquad \qquad \qquad \quad+\frac{\lambda+A^+}{\eta\mu}\,c_1(1-\alpha)_{\max}(G^2+\sigma^2).
\end{aligned}
\end{equation}
with iteration complexity
\(T_\varepsilon \le \frac{1}{\mu\eta}\log\frac{\Psi^0-\Psi^\star}{\varepsilon}\).

\begin{corollary}
\label{cor:participation}
If a PHY‑aware policy ensures \(\pi'_{i,t}\ge \pi_{i,t}\) for all \(i\) when switching from $\mathrm{CONV}$ to PA (eligible set nondecreasing), then
\(
\Xi^{\text{safe}}_{\text{PA},t} \le \Xi^{\text{safe}}_{\mathrm{CONV},t},
\)
thereby reducing both the variance floor and EF-injection floor in \eqref{eq:onestep-final} and \eqref{eq:pl_linear}.
\end{corollary}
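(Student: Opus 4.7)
The plan is to reduce Corollary~\ref{cor:participation} to two monotonicity observations: one at the inclusion‑probability level and one at the Lyapunov‑coefficient level. First, I would note that the hypothesis $\pi'_{i,t}\ge \pi_{i,t}$ is the per‑user, per‑round analogue of Theorem~\ref{thm:participation-dominance}: under PA pinning the link distance collapses to $r=d$, which is the global minimum of $r(x,z)=\sqrt{(x-z)^2+d^2}$, so $\tau_{i,t}=\tau(d)$ is the smallest attainable uplink time; combined with the nondecreasing $F_c$ in \eqref{eq:pi-it-correct} this yields individual inclusion‑probability dominance. Taking the minimum over $i\in\{1,\dots,K\}$ preserves ordering, which gives $\pi'_{\min,t}\ge \pi_{\min,t}$.

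Next, since the map $p\mapsto (K-1+1/p)/K$ is strictly decreasing on $(0,1]$, the previous inequality immediately implies $\Xi^{\text{safe}}_{\text{PA},t}\le \Xi^{\text{safe}}_{\text{CONV},t}$, which is the first assertion of the corollary. The remaining step is to propagate this single inequality through the two floor terms in \eqref{eq:onestep-final} and \eqref{eq:pl_linear}. The variance floor $\tfrac{3}{2}L\eta^2\,\Xi^{\text{safe}}(\sigma^2+\delta^2)$ is linear in $\Xi^{\text{safe}}$, so the conclusion is immediate. For the EF‑injection floor $(\lambda+A^+)\,c_1(1-\alpha)_{\max}(G^2+\sigma^2)$, I would note that $A^+=\tfrac{1}{L}+\tfrac{3}{2}L\eta^2\,\Xi^{\text{safe}}$ is increasing in $\Xi^{\text{safe}}$, and the least admissible $\lambda$ from \eqref{eq:lambda-cond} (or its PL counterpart with denominator $1-\rho_{\max}-\eta\mu$) is increasing in $A^+$; hence the entire prefactor shrinks when passing from CONV to PA.

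The main obstacle is the treatment of the free parameter $\lambda$, because the Lyapunov construction leaves $\lambda$ unspecified beyond a feasibility inequality, and a termwise comparison of the two bounds is only meaningful when $\lambda$ is tied to $\Xi^{\text{safe}}$. I would resolve this by either (i) instantiating the bound with the smallest feasible $\lambda$ on each side, so that $\lambda$ becomes a monotone function of $A^+$ and the ordering is inherited automatically, or (ii) fixing one $\lambda$ that is feasible at the larger $\Xi^{\text{safe}}$ (hence also at the smaller one) and observing that only $A^+$ and $\Xi^{\text{safe}}$ change between the two regimes, so the termwise inequality still holds. A minor technicality is to verify that the PL feasibility condition $\eta\mu<1-\rho_{\max}$ is preserved; since $\rho_{\max}$ is a function of the bit budget $b_\star$ and of $\alpha(\cdot)$ alone, and neither depends on the antenna architecture, this holds on both sides without extra assumption.
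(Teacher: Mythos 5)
Your proof is correct and follows essentially the same route as the paper: the hypothesis yields $\pi'_{\min,t}\ge\pi_{\min,t}$, and since $\Xi^{\rm safe}(x)=\frac{K-1+1/x}{K}$ is strictly decreasing in $x$ (the paper simply notes $\frac{d}{dx}\Xi^{\rm safe}(x)=-\frac{1}{Kx^{2}}<0$ and invokes Theorem~\ref{thm:participation-dominance}), the ordering of the amplification factors and of the $\Xi^{\rm safe}$-dependent floors follows. Your explicit treatment of the free parameter $\lambda$ (instantiating the minimal feasible $\lambda$, or fixing a common $\lambda$ feasible at the larger $\Xi^{\rm safe}$ so that only $A^{+}$ varies) fills in a propagation step that the paper leaves implicit, but it is an elaboration of the same monotonicity argument rather than a different approach.
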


With $\Xi_{\text{safe}}(x)=\frac{K-1+1/x}{K}$ we have $\Xi_{\text{safe}}’(x)=-\frac{1}{Kx^{2}}<0$. Thus, whenever PA raises the minimum inclusion probability $\pi_{\min,t}$ (enabled by the larger eligible set in Theorem \ref{thm:participation-dominance}).

\begin{corollary}
\label{cor:staleness}
Assume \(T_{\text{round}}(t)=T_{\text{comp}}+\frac{d_w b_t}{W R_{\text{svc},t}}\) with the same \(\{T_{\text{comp}},d_w,b_t,W\}\) under $\mathrm{CONV}$ and PA. Let $\Delta_{\max}$ denote the maximum update staleness.
If \(R_{\text{svc},t}^{\text{PA}}\ge R_{\text{svc},t}^{\mathrm{CONV}}\) for all \(t\), then
\(T_{\text{round}}^{\text{PA}}(t)\le T_{\text{round}}^{\mathrm{CONV}}(t)\) and, for a fixed wall‑clock horizon,
\(\Delta_{\max}^{\text{PA}}\le \Delta_{\max}^{\mathrm{CONV}}\). A sufficient condition is $\eta\le \frac{c_0}{L(1+\Delta_{\max})}$ for some absolute constant $c_0\in(0,1]$ \cite{ASGD}, and consequently, 
\(
\eta_{\max}^{\text{PA}} \ge \eta_{\max}^{\mathrm{CONV}},
\)
so PA allows a larger admissible step size and improves wall-clock convergence.
\end{corollary}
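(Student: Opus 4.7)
The plan is to derive the three consequences in sequence---round-time ordering, staleness ordering, and step-size ordering---each of which reduces to a short monotonicity argument once the preceding claim is in hand. First, the round-time inequality is purely algebraic: writing $T_{\text{round}}(t)=T_{\text{comp}}+\frac{d_w b_t}{W R_{\text{svc},t}}$, only the uplink term depends on the antenna configuration, and it is strictly decreasing in $R_{\text{svc},t}$. The hypothesis $R^{\text{PA}}_{\text{svc},t}\ge R^{\text{CONV}}_{\text{svc},t}$ then yields $T^{\text{PA}}_{\text{round}}(t)\le T^{\text{CONV}}_{\text{round}}(t)$ pointwise in $t$.

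Next, I would transfer this pointwise bound to the maximum staleness. The standard AFL bookkeeping gives $\Delta_{\max}\le T_{\text{round,max}}/T_{\text{round,min}}$, since a user whose local round spans at most $T_{\text{round,max}}$ wall-clock time can be overtaken by at most that many intervals of length $T_{\text{round,min}}$. Under PA, the earlier results of Sections~\ref{sec:latency} and \ref{sec:pa-deadline-final}---in particular the stochastic dominance of Theorem~\ref{thm:SFL-dominance} and the participation dominance of Theorem~\ref{thm:participation-dominance}---ensure that the worst-case $T_{\text{round,max}}$ shrinks proportionally more than $T_{\text{round,min}}$, because pinching targets the \emph{slowest} link while fast users are already near their rate ceiling. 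The ratio contracts and $\Delta^{\text{PA}}_{\max}\le \Delta^{\text{CONV}}_{\max}$ on any common wall-clock horizon.

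Finally, the admissible step-size condition $\eta\le c_0/(L(1+\Delta_{\max}))$ from \cite{ASGD} is strictly decreasing in $\Delta_{\max}$, so Step~2 immediately yields $\eta^{\text{PA}}_{\max}\ge \eta^{\text{CONV}}_{\max}$. Plugging this into \eqref{eq:pl_linear}, the iteration complexity scales as $1/(\mu\eta)$ and each iteration now costs at most $T^{\text{PA}}_{\text{round}}(t)$ wall-clock seconds, so PA simultaneously enlarges the admissible stepsize and shortens per-iteration time, delivering a strictly faster wall-clock convergence rate in both SFL and AFL regimes. The main obstacle is Step~2: making the passage from a pointwise round-time bound to a worst-case staleness bound fully rigorous, since $\Delta_{\max}$ depends on the joint distribution of user compute/transmit times and the server's update schedule rather than merely on round-time means. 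The other two steps are one-line monotonicity checks.
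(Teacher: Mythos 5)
Your Steps~1 and~3 are exactly the paper's reasoning: the uplink term is monotone decreasing in $R_{\text{svc},t}$, so $R^{\text{PA}}_{\text{svc},t}\ge R^{\mathrm{CONV}}_{\text{svc},t}$ gives $T^{\text{PA}}_{\text{round}}(t)\le T^{\mathrm{CONV}}_{\text{round}}(t)$ pointwise, and the admissible-stepsize bound $\eta\le c_0/\!\left(L(1+\Delta_{\max})\right)$ (cited from \cite{ASGD}, not re-derived) is decreasing in $\Delta_{\max}$, so any ordering of the staleness bounds transfers to $\eta_{\max}$. The problem is your Step~2, which you yourself flag. You replace the corollary's hypothesis with a ratio bound $\Delta_{\max}\le T_{\text{round,max}}/T_{\text{round,min}}$ and then need the claim that PA shrinks $T_{\text{round,max}}$ \emph{proportionally more} than $T_{\text{round,min}}$. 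That claim is not implied by the stated hypothesis (pointwise rate domination), nor by Theorem~\ref{thm:SFL-dominance} or Theorem~\ref{thm:participation-dominance}: those results compare PA against $\mathrm{CONV}$ for the same quantity (bottleneck distance, round time, participant count); they say nothing about how the fastest versus slowest round times within each system scale relative to one another. Pointwise domination is perfectly compatible with the minimum round time shrinking by a larger factor than the maximum, in which case your ratio bound moves in the wrong direction and the comparison of the two upper bounds proves nothing about $\Delta^{\text{PA}}_{\max}$ versus $\Delta^{\mathrm{CONV}}_{\max}$. Comparing upper bounds is also logically insufficient even when the ratio does contract. Moreover, the bookkeeping inequality itself is shaky in AFL, where server updates occur whenever any upload lands, so consecutive updates need not be separated by $T_{\text{round,min}}$.

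The paper's intended route avoids this entirely and is more elementary. In Corollary~\ref{cor:staleness}, $R_{\text{svc},t}$ is a single per-round design-time service target (e.g., $R(d)$ under PA, as stated just after the compression-fidelity model), so $T_{\text{round}}(t)$ is one number per round rather than a per-user family admitting max/min ratios. With the same compute times, payload, bandwidth, and trigger schedule in both systems, the pointwise ordering $T^{\text{PA}}_{\text{round}}(t)\le T^{\mathrm{CONV}}_{\text{round}}(t)$ means every update that is in flight over a fixed wall-clock horizon spans no more server-update intervals under PA than under $\mathrm{CONV}$, which gives $\Delta^{\text{PA}}_{\max}\le \Delta^{\mathrm{CONV}}_{\max}$ directly; no comparison of how much the tail shrinks relative to the fastest link is needed. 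If you want to keep your per-user framing, you would have to prove a coupling statement of that form (same schedule, update-by-update comparison) rather than pass through max/min ratios.
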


\begin{figure}[t]
  \vspace{-10pt} 
  \centering
    \vspace{-6pt} 
  \begin{subfigure}{0.47\textwidth}
    \centering
    \includegraphics[width=\linewidth]{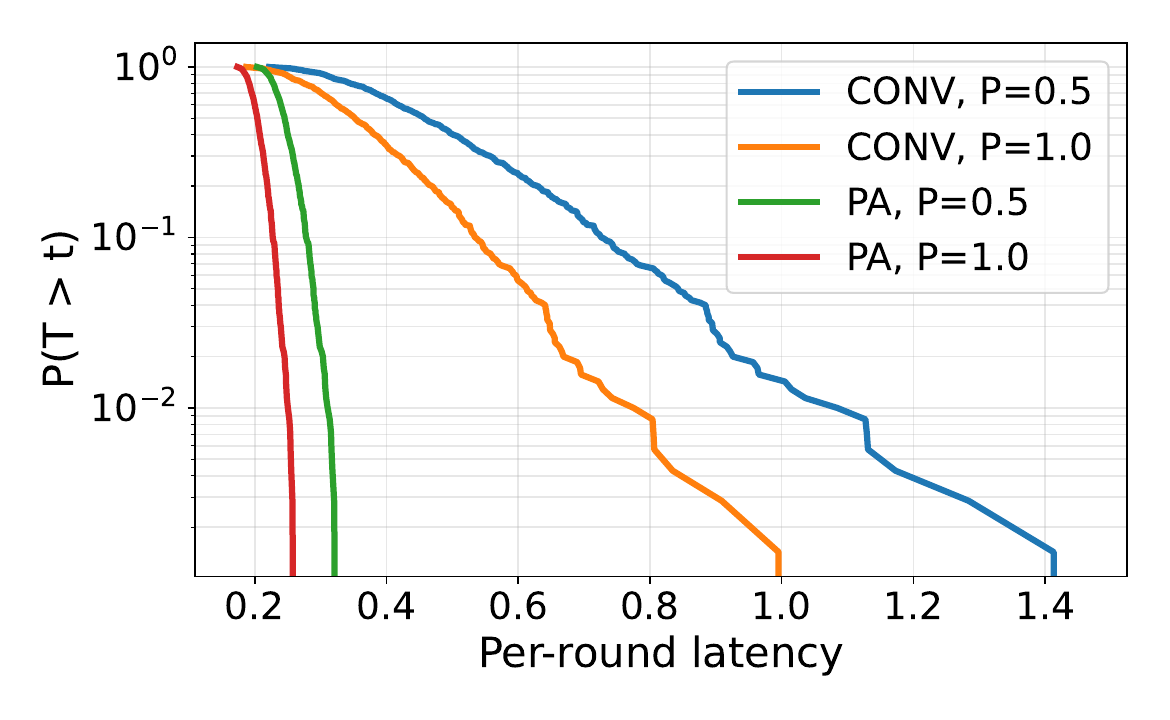}
    \subcaption{SFL: CCDF of per-round latency \(T\).}
    \label{fig:ccdf-sfl}
  \end{subfigure}\hfill
  \begin{subfigure}{0.47\textwidth}
    \centering
    \includegraphics[width=\linewidth]{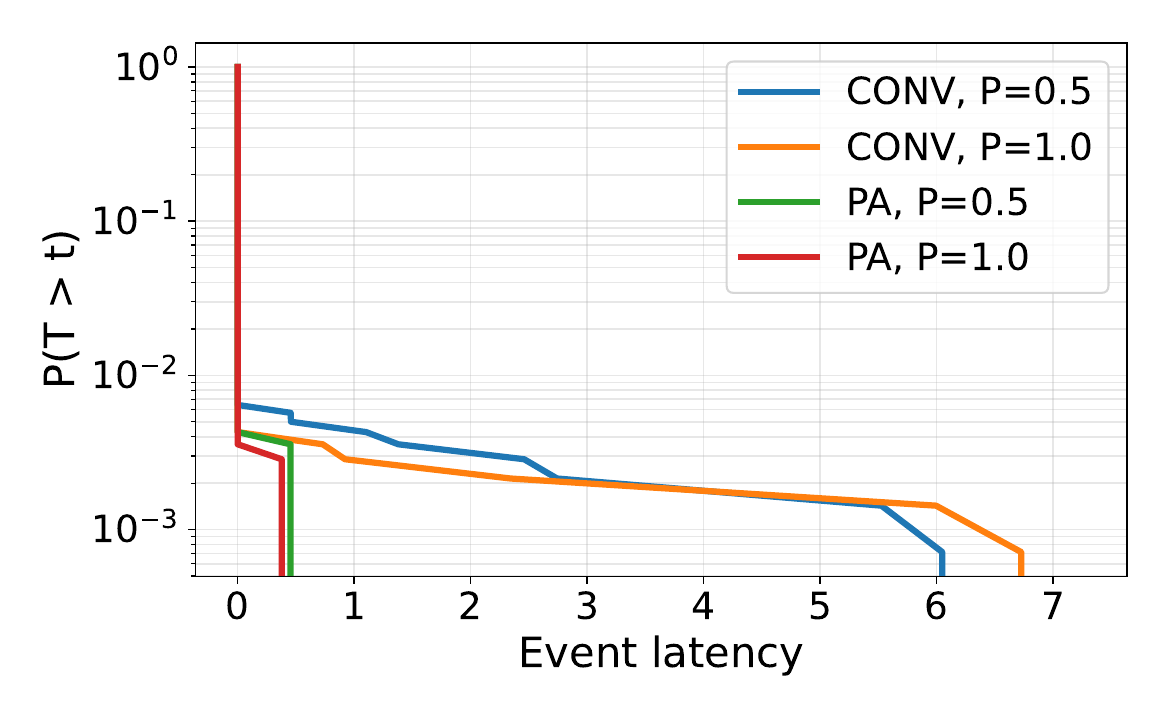}
    \subcaption{AFL: CCDF of per-event (upload) latency \(T\).}
    \label{fig:ccdf-afl}
  \end{subfigure}
  \caption{CCDF \( \Pr[T>t] \) under $\mathrm{CONV}$ and PASS with $D=10$ m, $d=3$ m and $W=1$ MHz. PASS shifts the distribution left and steepens the right tail in both SFL and AFL.}
  \label{fig:ccdf}
    \vspace{-10pt} 
\end{figure}

\section{Simulation Results}
In this section, the simulation results are presented to evaluate the impact of the PASS on FL under both the SFL and AFL.
We consider a wireless FL scenario with $K=40$ users uniformly distributed with $W=1\,\mathrm{MHz}$, $D=10$ m and $d =3$ m \cite{Ding_PASS_Original}.
For demonstration purposes, we conduct the simulation on MNIST with a simple two-layer MLP. Unless otherwise stated, in each SFL round, the server selects $M$ users, and in AFL, we assume a periodic trigger to which each user responds with probability. \(B_t\) are kept identical between $\mathrm{CONV}$ and PASS for fairness unless we explicitly vary the compression level. For compression/EF, symmetric uniform quantization with error-feedback is used with bit-width $b\in\{4,6,8\}$. The IID data distribution is considered across all users. For fair comparisons, each simulation related to FL is optimized over learning rates in two orders of magnitude with mini-batch size $64$ (test batch $256$), i.e., $\eta \in \{10^{-3},5\times10^{-3},\cdots,10^{-1},5\times10^{-1}\}$, with each learning rate averaged over 3 runs.

\begin{figure}[t]
  \vspace{-10pt} 
  \centering
  \begin{subfigure}{0.47\textwidth}
    \centering
    \includegraphics[width=\linewidth]{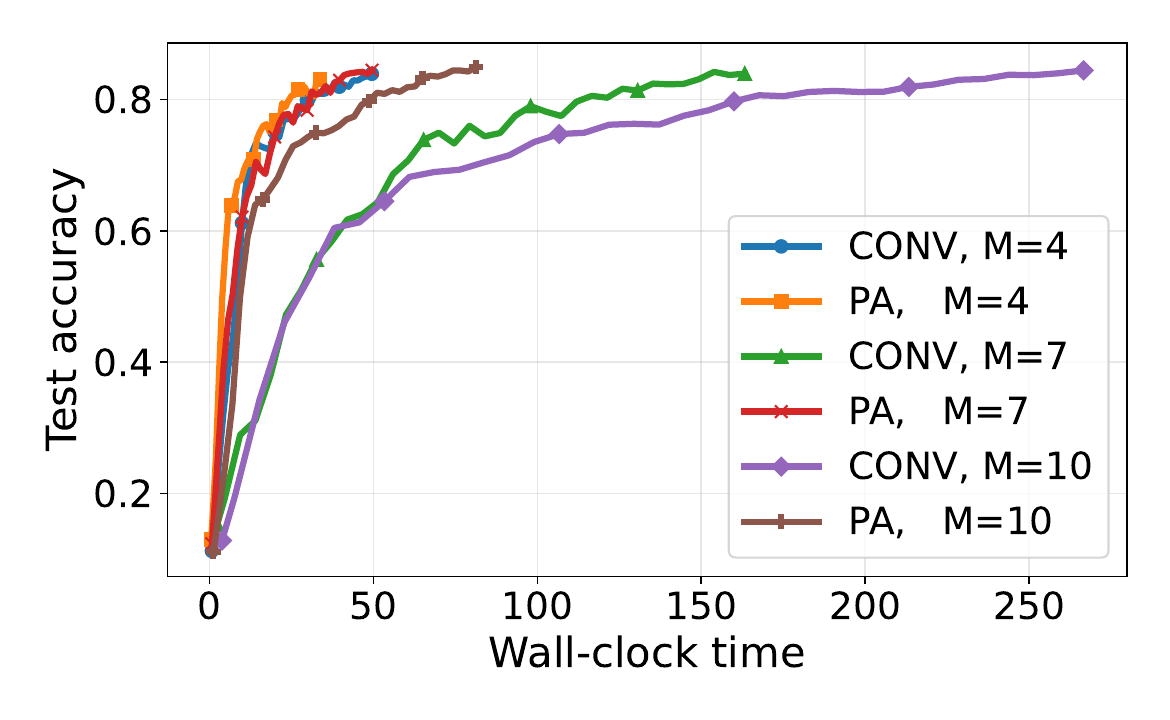}
    \subcaption{SFL: varying scheduled users \(M\in\{4,7,10\}\).}
    \label{fig:acc-sfl}
  \end{subfigure}\hfill
  \begin{subfigure}{0.47\textwidth}
    \centering
    \includegraphics[width=\linewidth]{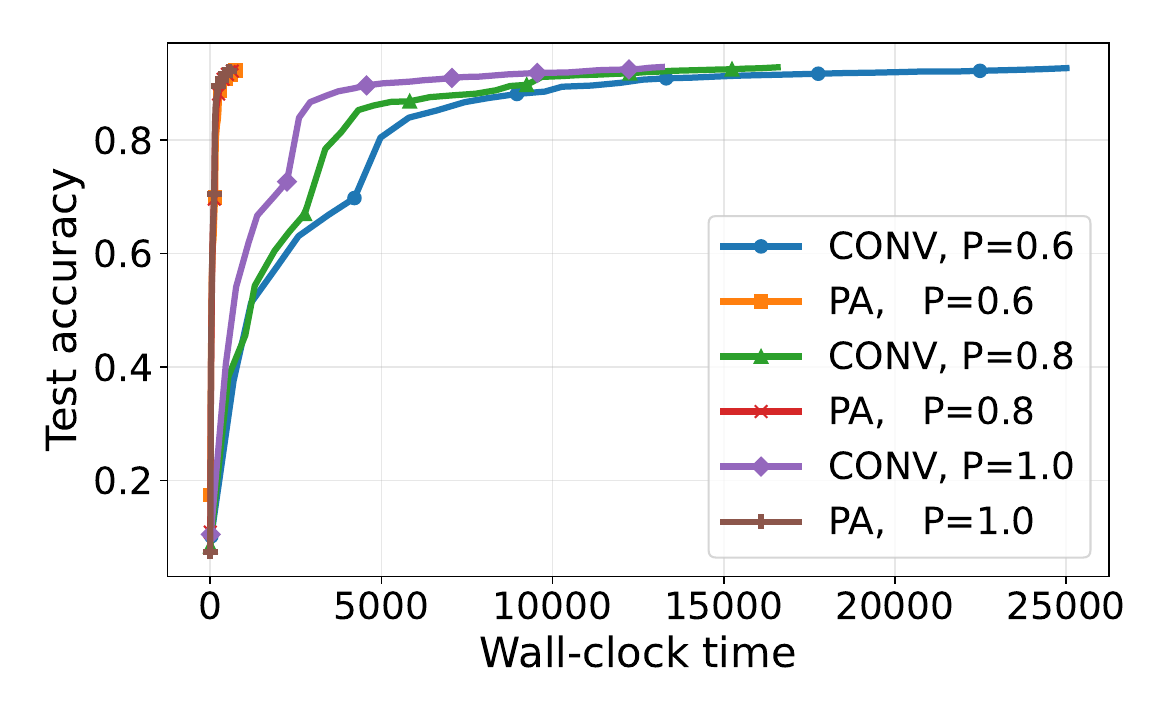}
    \subcaption{AFL: varying power \(P\in\{0.6,0.8,1.0\}\).}
    \label{fig:acc-afl}
  \end{subfigure}
  \caption{Test accuracy versus wall-clock time under $\mathrm{CONV}$ and PASS. PASS shortens links and per-event/round latency, yielding faster wall-clock convergence in both SFL and AFL.}
  \label{fig:acc}
    \vspace{-10pt} 
\end{figure}
It can be seen that PA markedly shifts the distribution left and steepens the right–tail decay in Fig. \ref{fig:ccdf}. For a fixed $t$, the exceedance probability under PA is lower by well over an order of magnitude across most of the operating range. Increasing power $P$ from 0.5 to 1 W further shortens the tail under both architectures, but the dominant effect is the PA-induced reduction of worst-link latency.
In AFL, the CCDF collapses near the minimum latency under PA. By contrast, $\mathrm{CONV}$ exhibits heavy multisecond tails even at $P=1$ W. This indicates a regime shift: PA lifts link rates and suppresses latency dispersion so strongly that AFL becomes compute/trigger-limited rather than communication-limited, rendering participation probability a secondary factor in wall-clock progress. The empirical CCDFs of the figure illustrate this elimination of the tail and the resulting insensitivity to $P$. 
These tail reductions increase $\pi_{\min}$ and shrink $\Xi_{\mathrm{safe}}$, via the stepsize condition and the floors.

Fig. \ref{fig:acc-sfl} studies the impact of test accuracy on the wall-clock time under SFL, comparing $\mathrm{CONV}$ and the proposed PA.
Across all $M$, PA achieves a target accuracy in less time than $\mathrm{CONV}$, with the separation widening as $M$ grows. This scaling is consistent with the SFL critical-path being the maximum uplink time among the $M$ selected users; by pinning the radiator to the scheduled set’s midpoint per round, PA shortens the longest path (reduces the latency tail), shrinking per-round duration and increasing the number of rounds completed per unit time. Final accuracies are similar; the dominant effect is reduced wall-clock convergence. 
This aligns with a larger admissible stepsize under PASS, in addition to shorter rounds.
Fig. \ref{fig:acc-afl} illustrates the test accuracy versus wall-clock time under asynchronous FL for power $P\in\{0.6,0.8,1.0\}$.
Surprisingly, PA drives a near-vertical rise in accuracy, reaching high accuracy in a small fraction of the time required by $\mathrm{CONV}$. This surprising robustness to sparser participation indicates that in AFL progress is limited primarily by how quickly uploads finish, rather than by the participation level itself. By increasing the effective spectral efficiency of the users and sharply contracting the latency distribution, PA accelerates the stream of updates so strongly that the lower $P$ ceases to be the bottleneck. In contrast, $\mathrm{CONV}$ remains event-latency-limited, so reduced $P$ translates into markedly slower wall-clock convergence. Together with reduced staleness in Corollary \ref{cor:staleness}, PASS admits a larger stable stepsize and lower floors, yielding the observed wall-clock speedup even at smaller $P$.

\begin{figure}[t]
  \vspace{-10pt} 
  \centering
  \begin{subfigure}{0.47\textwidth}
    \centering
    \includegraphics[width=\linewidth]{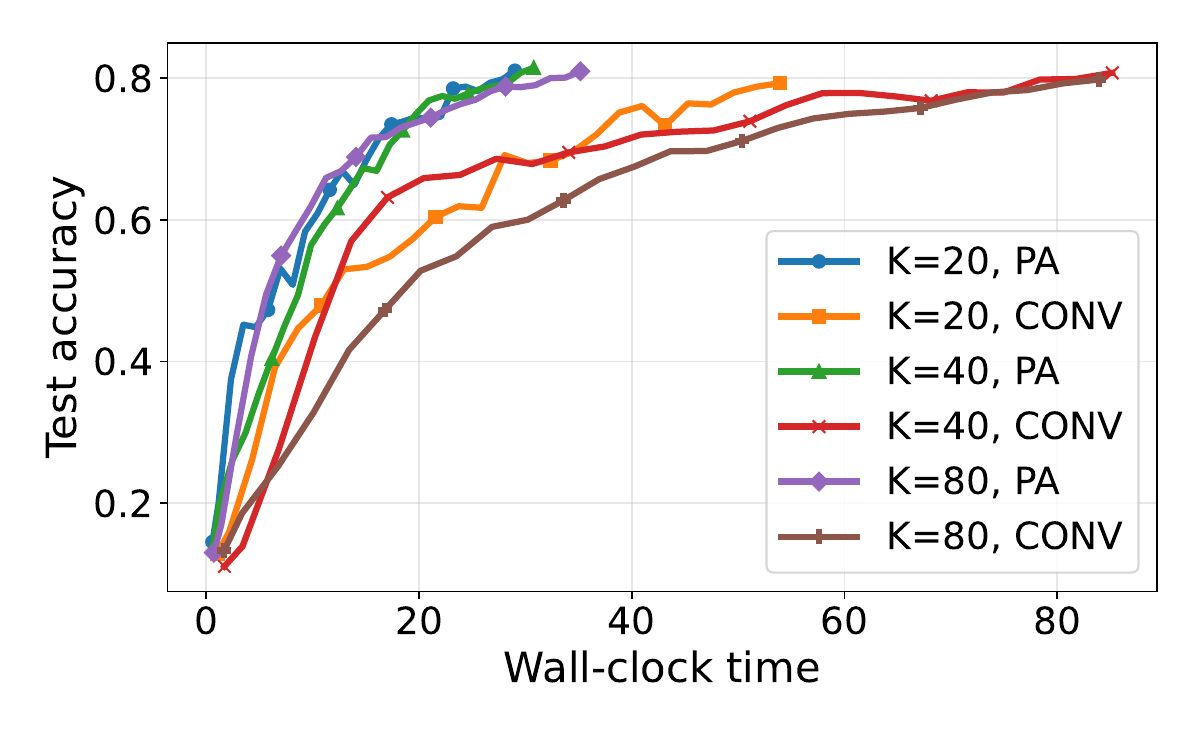}
    \subcaption{SFL: scaling with the number of users $K$.}
    \label{fig:acc-sfl_k}
  \end{subfigure}\hfill
  \begin{subfigure}{0.47\textwidth}
    \centering
    \includegraphics[width=\linewidth]{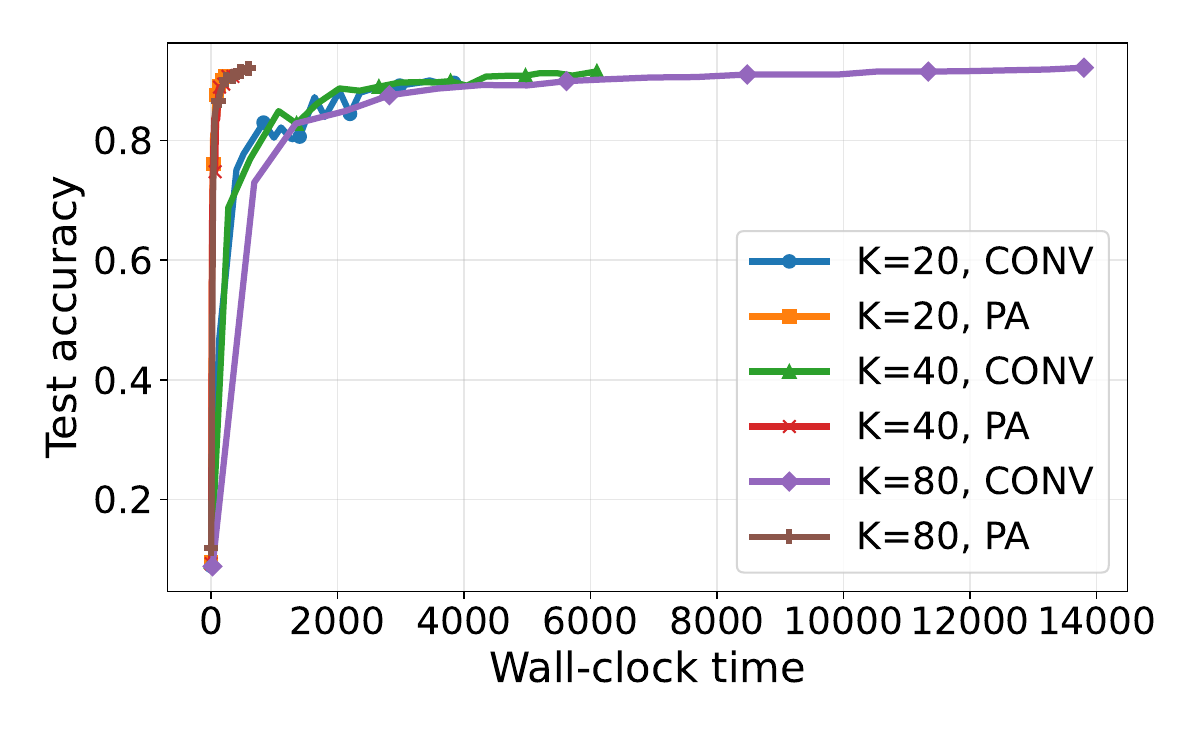}
    \subcaption{AFL: scaling with the number of users $K$.}
    \label{fig:acc-afl_K}
  \end{subfigure}
  \caption{Test accuracy vs. wall-clock time in FL as $K$ scales. }
  \label{fig:acc_k}
    \vspace{-0.8em} 
\end{figure}

Fig. \ref{fig:acc_k} illustrates that PASS consistently reaches a target accuracy sooner, and the gap widens with more users because pinching shortens the round’s critical path (the slowest uplink), reducing per-round duration and increasing rounds completed per unit time. In the AFL case shown in Fig. \ref{fig:acc-afl_K}, the curves rise rapidly and are nearly invariant to user count, indicating a shift toward compute/trigger-limited behavior under PASS. In contrast, $\mathrm{CONV}$ slows markedly as user count grows due to heavy straggler tails, making wall-clock progress sensitive to participation and latency dispersion.

\begin{figure}[htp]
  \vspace{-10pt} 
  \centering
  \begin{subfigure}{0.47\textwidth}
    \centering
    \includegraphics[width=\linewidth]{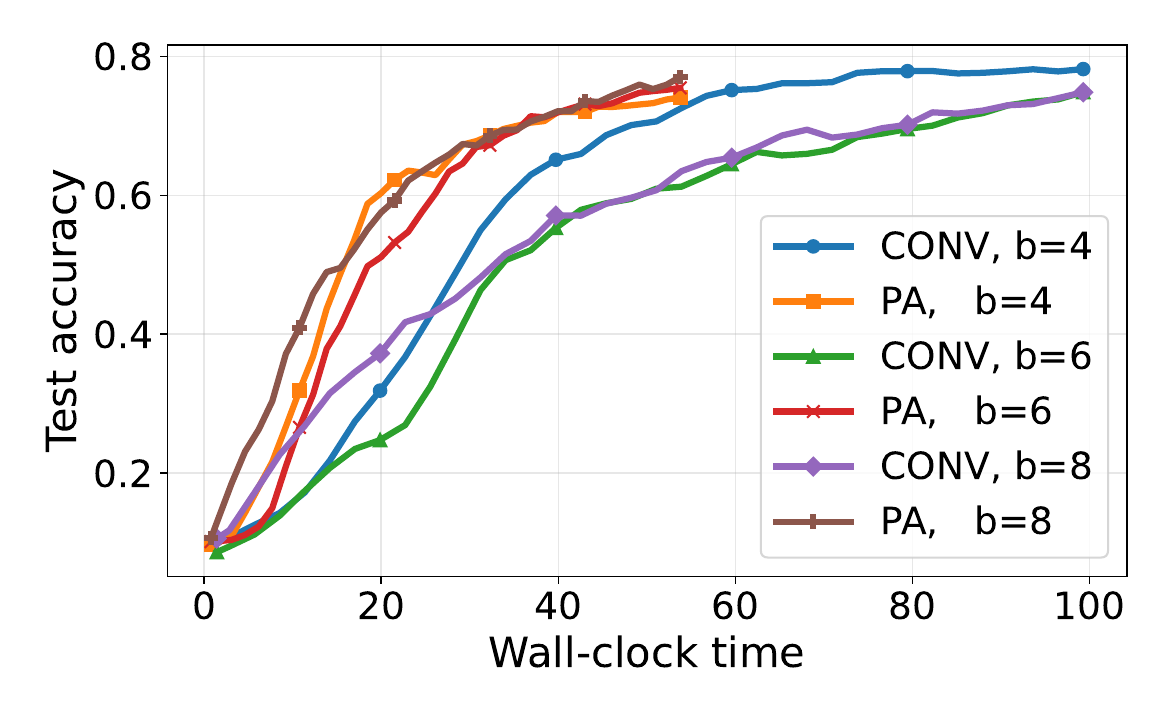}
    \subcaption{SFL: varying compression bits \(b\in\{4,6,8\}\).}
    \label{fig:acc-b-sfl}
  \end{subfigure}\hfill
  \begin{subfigure}{0.47\textwidth}
    \centering
    \includegraphics[width=\linewidth]{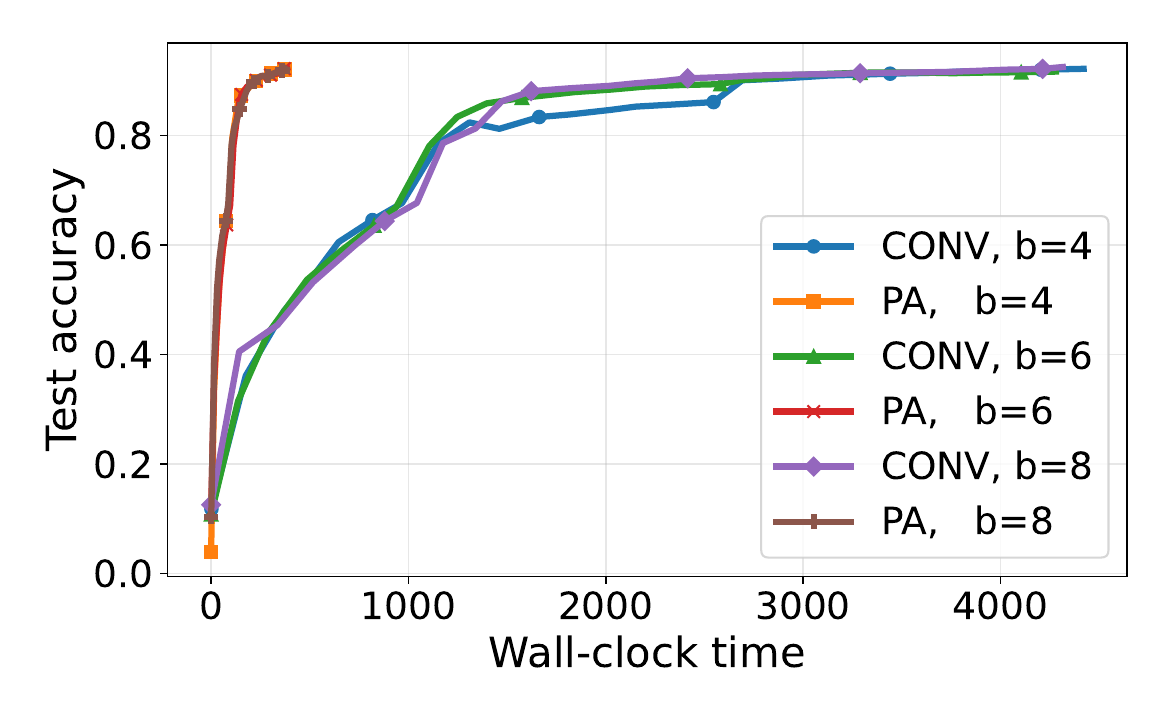}
    \subcaption{AFL: varying compression bits \(b\in\{4,6,8\}\).}
    \label{fig:acc-b-afl}
  \end{subfigure}
  \caption{Effect of payload size via quantization bits \(b\): PASS vs.\ $\mathrm{CONV}$.}
  \label{fig:acc-b}
    \vspace{-9pt} 
\end{figure}

Fig.~\ref{fig:acc-b-sfl} shows that PASS consistently reduces time-to-accuracy for all \(b\), with a modestly widening gap as \(b\) increases. This is due to the SFL maximum-uplink bottleneck: Although error feedback compression reduces the payload of each user linearly in \(b\), the worst link still governs each round. PASS mitigates this tail even for small payloads and the benefit persists as \(b\) grows. 
In AFL (Fig.~\ref{fig:acc-b-afl}), the PASS curves for \(b=4,6,8\) nearly coincide and reach high accuracy rapidly, whereas $\mathrm{CONV}$ shows a clear ordering with \(b\) (larger \(b \Rightarrow\) slower), reflecting that under PASS the system becomes throughput/compute-limited, while $\mathrm{CONV}$ remains communication-dominated.

\begin{figure}[t]
  \vspace{-10pt} 
  \centering
  \begin{subfigure}{0.47\textwidth}
    \centering
    \includegraphics[width=\linewidth]{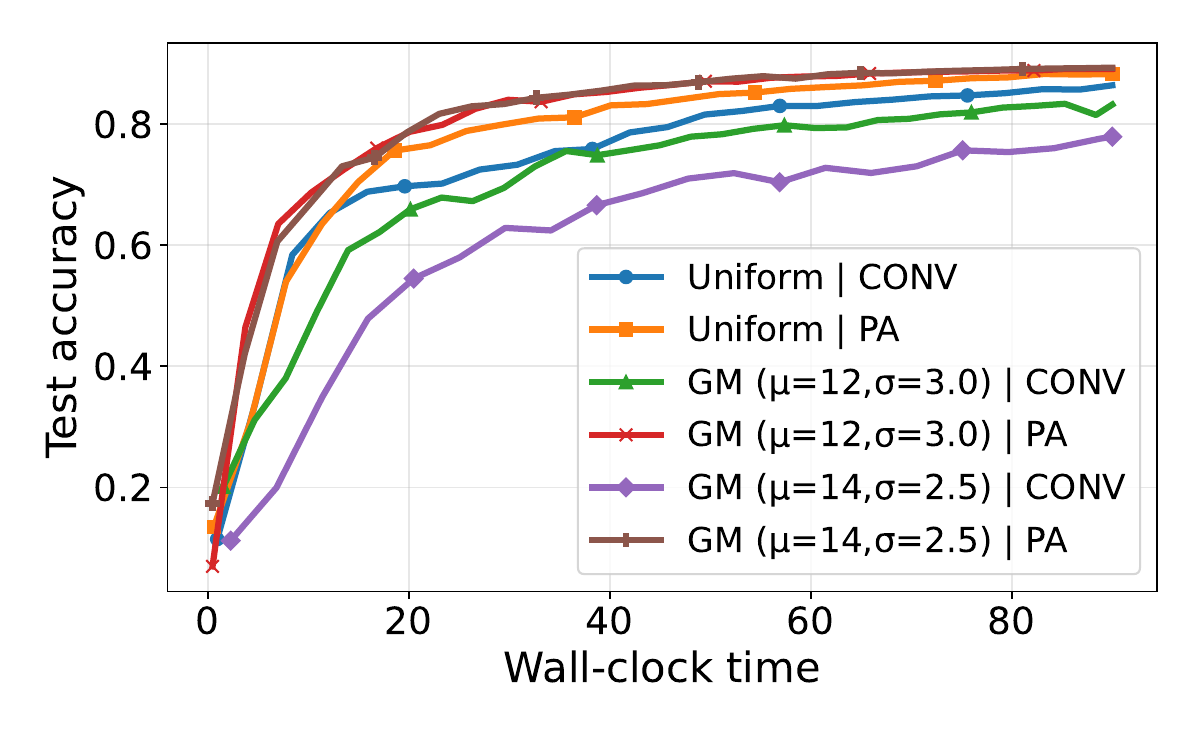}
    \subcaption{SFL: Uniform vs.\ two-component GM.}
    \label{fig:spatial-sfl}
  \end{subfigure}\hfill
  \begin{subfigure}{0.47\textwidth}
    \centering
    \includegraphics[width=\linewidth]{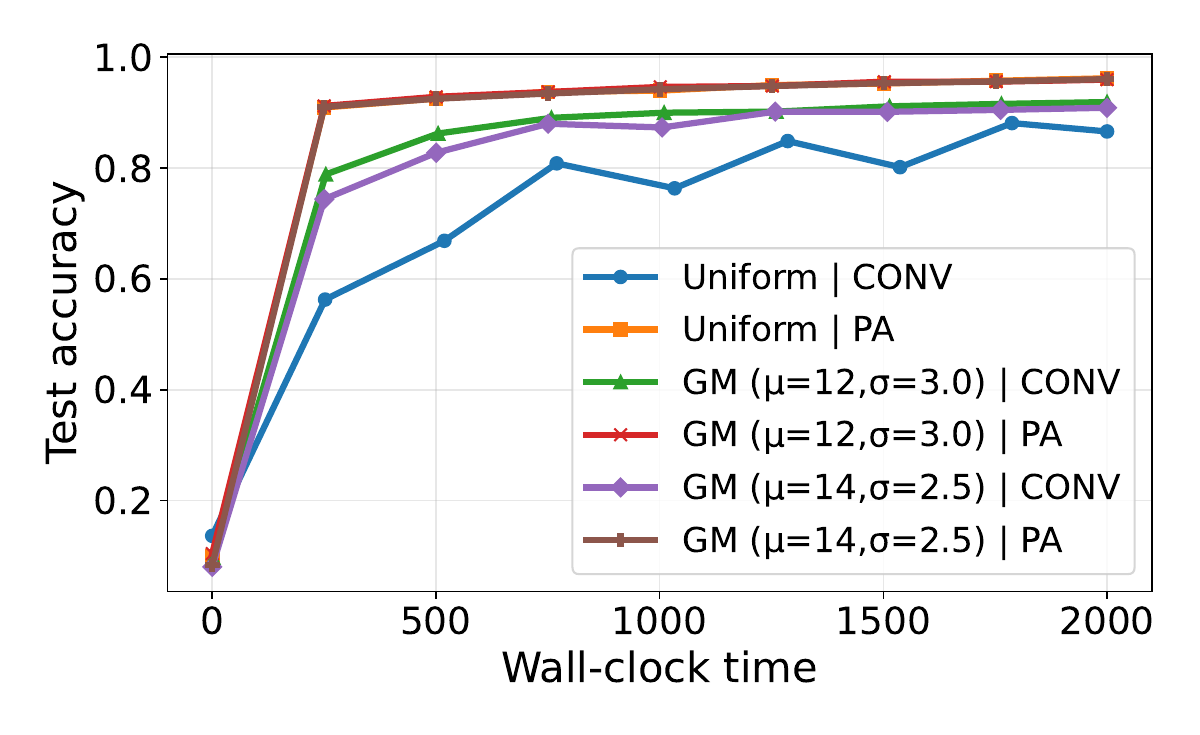}
    \subcaption{AFL: Uniform vs.\ two-component GM.}
    \label{fig:spatial-afl}
  \end{subfigure}
  \caption{Sensitivity to spatial user distributions under $\mathrm{CONV}$ and PASS.}
  \label{fig:spatial}
\end{figure}

Fig.~\ref{fig:spatial-sfl} illustrates test accuracy versus wall-clock time under two spatial distributions, including Uniform and GM with \((\mu,\sigma)\in\{(12,3.0),(14,2.5)\}\). In all cases, PASS achieves an earlier time-to-accuracy, and the gap widens under the more clustered GM configuration \((14,2.5)\). 
Spatial clustering skews the distance distribution and inflates the maximum uplink time for $\mathrm{CONV}$, whereas PASS approximately sets \(r_m=d\), contracting both the mean and the dispersion of per-round latencies. In AFL (Fig.~\ref{fig:spatial-afl}), PASS curves for all spatial distributions nearly coincide and converge rapidly, while $\mathrm{CONV}$ is noticeably slower and more distribution dependent, consistent with PASS suppressing event-latency dispersion so that wall-clock progress is governed by trigger/compute rather than communication heterogeneity.

\subsection{Discussion}
Given the analysis and simulations above, pinching the active radiator shortens the worst link and suppresses the latency tail, translating into faster wall‑clock learning. 
The considered uplink model activates one PA per uplink (AFL) and a single midpoint PA per SFL round, avoiding multi‑PA re‑radiation on a continuous waveguide and yielding a tractable rate/latency model.

In the downlink PASS, it is well captured by feed‑to‑waveguide EM coupling with passive radiation from activated PAs \cite{Ding_PASS_Original, Ouyang_array_gain}. However, uplink is non-trivial to be modelled because each PA can receive free‑space energy into the waveguide and re‑radiate via other PAs along a continuous line, complicating physically consistent multi‑PA uplink models \cite{ouyang_segment}.
A practical implementation used in our analysis is one PA per waveguide, which removes in‑waveguide re‑radiation, keeps the model tractable, and already delivers the measured latency/participation/convergence gains.
To fully exploit PASS without sacrificing uplink tractability and introducing in-waveguide attenuation, segmented waveguide‑enabled architectures (SWAN‑style designs) split a long waveguide into separately fed segments. Such design activates at most one PA per segment eliminates inter‑segment re‑radiation while shortening PA–user and PA–feed distances, which can serve as a promising direction for the uplink wireless FL \cite{ouyang_segment}.

\section{Conclusion}
We have presented a comprehensive study of FL in wireless networks from a PA perspective, developing both theoretical insights and system-level evaluations. Our analysis confirms that PASS can fundamentally alleviate the wireless straggler bottleneck in FL. In SFL, we proved that pinching the antenna to users strictly reduces the worst-case link distance, leading to uniformly faster round times. In AFL, we showed that PASS increases timely user participation and effectively suppresses long tail delays, shifting the training regime from communication-limited to computation-limited. We also integrated these physical layer gains into an FL convergence analysis, finding that PASS raises the floor for user inclusion probabilities and lowers variance and quantization error floors, yielding faster convergence in wall-clock time.
The simulation results validated our theoretical findings: PASS consistently accelerated model training and improved reliability compared to $\mathrm{CONV}$. In summary, by tackling stragglers at the physical layer, PASS offers a new degree of freedom to speed up FL. This work provides a foundational understanding of that benefit. Future research can investigate combining PASS with advanced scheduling or adaptive algorithms to further enhance performance in practical deployments.

\appendices
\section{The Mean and Second Moment of \texorpdfstring{$Y_{[M]}$}{Y_{[M]}}} \label{appendix:Ym}
By normalizing via $\widetilde Y_i:=\frac{2}{D}Y_i\sim\mathcal{U}[0,1]$, we obtain $\widetilde Y_{[M]}\sim \mathrm{Beta}(M,K{+}1{-}M)$ with probability density function:
\begin{equation}\label{eq:beta-pdf}
f_{\widetilde Y_{[M]}}(u)=\frac{K!}{(M{-}1)!(K{-}M)!}\,u^{M-1}(1-u)^{K-M}, u\in[0,1],
\end{equation}
and moments given by \cite[Ch.~2]{DavidNagaraja03}):
\begin{equation}
\mathbb E[\widetilde Y_{[M]}]=\frac{M}{K+1},\quad 
\mathbb E[\widetilde Y_{[M]}^2]=\frac{M(M+1)}{(K+1)(K+2)}.
\label{eq:beta-moments}
\end{equation}

\section{Proof of Concentration for \texorpdfstring{$M$}{M}-th closest user}\label{proof_lemma_1}
\begin{remark}[Tail concentration of the $M$-th closest user]
The normalized bottleneck offset $\widetilde Y_{[M]}$ concentrates exponentially around its mean $M/(K{+}1)$.
\end{remark}

\begin{lemma}[Concentration for the $M$-th closest user]
\label{lem:conv-concentration-corrected}
Let $\widetilde Y_{[M]}$ be the $M$-th order statistic of $K$ i.i.d.\ $\mathcal U[0,1]$ and set
$p^\dagger := \frac{M}{K+1}$, $p^\star := \frac{M}{K}$. For any $\varepsilon \in (0,\min\{p^\dagger,1-p^\dagger\})$,
\begin{align}
\mathbb{P}\big(\widetilde Y_{[M]} \le p^\dagger - \varepsilon\big)
&\le \exp\!\Big(-K\,D\big(p^\star \,\big\|\, p^\dagger-\varepsilon\big)\Big), \tag{B1a}\\
\mathbb{P}\big(\widetilde Y_{[M]} \ge p^\dagger + \varepsilon\big)
&\le \exp\!\Big(-K\,D\big(\tfrac{M-1}{K} \,\big\|\, p^\dagger+\varepsilon\big)\Big), \tag{B2b}
\end{align}
hence
\begin{equation}
\begin{aligned}
\mathbb{P}\big(|\widetilde Y_{[M]}-p^\dagger|\ge \varepsilon\big)
&\le \exp\!\Big(-K\,D\big(p^\star \,\big\|\, p^\dagger-\varepsilon\big)\Big)
\\&\quad
+ \exp\!\Big(-K\,D\big(\tfrac{M-1}{K} \,\big\|\, p^\dagger+\varepsilon\big)\Big).
\end{aligned}\tag{B3c}
\end{equation}
Moreover, a simple symmetric bound centered at the mean is
\begin{equation}
\mathbb{P}\big(|\widetilde Y_{[M]}-p^\dagger|\ge \varepsilon\big) \le 2\exp(-2K\varepsilon^2), \qquad (0<\varepsilon<1). \tag{B4d}
\end{equation}
All KL divergences are in nats. 
\end{lemma}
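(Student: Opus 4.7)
The plan is to reduce both tail probabilities of the uniform order statistic $\widetilde Y_{[M]}$ to binomial tails through the classical counting identity, then apply Chernoff bounds for (B1a)–(B1c) and Hoeffding for (B1d). All of the work is bookkeeping around two slightly different reference points: the mean $p^\dagger=M/(K+1)$ and the shifted thresholds $p^\star=M/K$ and $(M-1)/K$.

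First I would establish the identity that drives everything. For any $u\in(0,1)$, the event $\{\widetilde Y_{[M]}\le u\}$ is equivalent to $\{\,|\{i:U_i\le u\}|\ge M\,\}$, where $U_i$ are the underlying i.i.d.\ uniforms. Since $\mathbf 1\{U_i\le u\}\sim\mathrm{Bernoulli}(u)$ independently, we get $\mathbb{P}(\widetilde Y_{[M]}\le u)=\mathbb{P}(B\ge M)$ and $\mathbb{P}(\widetilde Y_{[M]}\ge u)=\mathbb{P}(B\le M-1)$ with $B\sim\mathrm{Bin}(K,u)$.

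Next I would apply Chernoff in its KL form. For (B1a) set $u=p^\dagger-\varepsilon$. Then $\mathbb{P}(\widetilde Y_{[M]}\le p^\dagger-\varepsilon)=\mathbb{P}(B\ge Kp^\star)$, and since $p^\star=M/K>M/(K+1)=p^\dagger>u$, the upper-tail Chernoff bound $\mathbb{P}(B\ge Ka)\le \exp(-K D(a\|u))$ yields (B1a). For (B1b) set $u=p^\dagger+\varepsilon$ and observe that $(M-1)/K<M/(K+1)\le u$ (the first inequality reduces to $M<K+1$), so the lower-tail Chernoff bound $\mathbb{P}(B\le Ka)\le\exp(-K D(a\|u))$ with $a=(M-1)/K$ gives (B1b). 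A union bound delivers (B1c).

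For the Hoeffding form (B1d), I would use $\mathbb{P}(|B/K-u|\ge\eta)\le 2e^{-2K\eta^2}$ together with the same two choices of $u$. The key observation is that the effective deviation $\eta$ is at least $\varepsilon$: for the lower tail, $\eta=p^\star-u=\varepsilon+(p^\star-p^\dagger)=\varepsilon+\frac{M}{K(K+1)}\ge\varepsilon$; for the upper tail, $\eta=u-(M-1)/K=\varepsilon+\frac{K-M+1}{K(K+1)}\ge\varepsilon$ whenever $1\le M\le K$. Hence each one-sided binomial probability is bounded by $e^{-2K\eta^2}\le e^{-2K\varepsilon^2}$, and a union bound produces the factor $2$.

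The only real subtlety is the sign bookkeeping around $p^\dagger$ versus the integer-threshold points $p^\star$ and $(M-1)/K$, and I would verify once that both gaps $p^\star-p^\dagger$ and $p^\dagger-(M-1)/K$ are nonnegative in the admissible range $1\le M\le K$, so that the Chernoff and Hoeffding inequalities apply in the claimed direction and the rebates only strengthen the bounds. Everything else is mechanical application of standard binomial concentration.
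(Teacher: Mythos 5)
Your proposal is correct and follows essentially the same route as the paper: the binomial counting identity for uniform order statistics, KL-form Chernoff bounds at the shifted thresholds $p^\star$ and $(M-1)/K$ for the one-sided bounds, and Hoeffding with the observation that the recentering gaps $p^\star-p^\dagger=\tfrac{M}{K(K+1)}$ and $p^\dagger-\tfrac{M-1}{K}=\tfrac{K+1-M}{K(K+1)}$ are nonnegative, so each effective deviation is at least $\varepsilon$. The sign bookkeeping you flag is exactly the step the paper verifies, and your union-bound accounting for the factor $2$ matches theirs.
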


\begin{IEEEproof}
We interpret order statistics via a Binomial counting argument. The distributional identities
\begin{equation}
\begin{aligned}
& \mathbb{P}(\widetilde Y_{[M]}\le u)=\mathbb{P}\!\big(\mathrm{Bin}(K,u)\ge M\big),\quad \\
&
\mathbb{P}(\widetilde Y_{[M]}\ge u)=\mathbb{P}\!\big(\mathrm{Bin}(K,u)\le M{-}1\big)
\end{aligned}
\end{equation}
are standard for uniform order statistics. Applying Chernoff bounds for a Binomial random variable gives (5a)–(5b):
$\mathbb{P}(\mathrm{Bin}(K,u)\ge Ka)\le e^{-K D(a\|u)}$ for $a>u$ and
$\mathbb{P}(\mathrm{Bin}(K,u)\le Ka)\le e^{-K D(a\|u)}$ for $a<u$, with
$u=p^\dagger\mp\varepsilon$ and $a\in\{p^\star,(M{-}1)/K\}$.
For \textup{(5d)}, write $\hat p:=\mathrm{Bin}(K,u)/K$. Then
\begin{equation}
\begin{aligned}
&\mathbb{P}(\widetilde Y_{[M]} \le p^\dagger-\varepsilon)
=\mathbb{P}\big(\hat p - u \ge p^\star - (p^\dagger-\varepsilon)\big) \\
& \qquad \qquad \qquad \qquad \le \exp\!\big(-2K\,[p^\star - p^\dagger + \varepsilon]^2\big),
\end{aligned}
\end{equation}
by Hoeffding’s inequality. Since $p^\star - p^\dagger = \frac{M}{K}-\frac{M}{K+1}=\frac{M}{K(K+1)}\ge 0$, the exponent is at most $-2K\varepsilon^2$. Similarly,
\begin{equation}
\begin{aligned}
& \mathbb{P}(\widetilde Y_{[M]} \ge p^\dagger+\varepsilon)
=\mathbb{P}\big(u-\hat p \ge (p^\dagger+\varepsilon) - \tfrac{M-1}{K}\big) \\
& \qquad \qquad \qquad \qquad \le \exp\!\big(-2K\,[p^\dagger - \tfrac{M-1}{K} + \varepsilon]^2\big),
\end{aligned}
\end{equation}
and $p^\dagger - \frac{M-1}{K}=\frac{K+1-M}{K(K+1)}\ge 0$, giving the same upper bound $\exp(-2K\varepsilon^2)$. 
\end{IEEEproof}

\section{Proof of Proposition~\ref{prop:PA-LB}}
\label{app:PA_LB_proof}

Let $U_{(1)}\le\cdots\le U_{(K)}$ be the order statistics of $K$ i.i.d.\ $\mathcal U[0,1]$ and define the simple spacings
$G_1:=U_{(1)}$, $G_j:=U_{(j)}-U_{(j-1)}$ for $2\le j\le K$, $G_{K+1}:=1-U_{(K)}$,
so that $(G_1,\dots,G_{K+1})\sim\mathrm{Dirichlet}(1,\dots,1)$.
Write $M_*:=\min_{1\le j\le K+1}G_j$.
For the uniform Dirichlet (i.e., uniform on the simplex), the survival function of $M_*$ admits a closed form:
\[
\mathbb P(M_*\ge t)=\left(1-(K{+}1)t\right)^K,\quad 0\le t\le \tfrac{1}{K{+}1},
\]
and $0$ otherwise. Differentiating,
\[
f_{M_*}(t)=K(K{+}1)\left(1-(K{+}1)t\right)^{K-1},\quad 0\le t\le \tfrac{1}{K{+}1}.
\]
Hence
\[
\mathbb E[M_*^2]
=\int_{0}^{\frac{1}{K+1}} t^2 f_{M_*}(t)\,dt
=\frac{K}{(K+1)^2}\int_{0}^{1} (1-u)^2 u^{K-1}\,du
\]
which further equals $=\frac{K}{(K+1)^2}\,\mathrm{Beta}(K,3)$,
where $u:=1-(K{+}1)t$ and $\mathrm{B}$ is the Beta function.
Using $\mathrm{B}(K,3)=\frac{2}{K(K+1)(K+2)}$, we obtain
\[
\mathbb E[M_*^2]=\frac{2}{(K+1)^3(K+2)}.
\]
Since every $m$-span is a sum of $m$ positive spacings, $\widetilde L_M\ge m M_*$ and therefore
\[
\mathbb E\big[\widetilde L_M^2\big]\ \ge\ m^2\,\mathbb E[M_*^2]
= m^2\,\frac{2}{(K+1)^3(K+2)}.
\]
Finally, $x_{\rm str}^{\rm PA}= \frac{D}{2}\,\widetilde L_M$ implies
\(
\mathbb E\!\big[x_{\rm str}^{\rm PA\,2}\big]
\ \ge\ \frac{D^2}{4}\,m^2\,\frac{2}{(K+1)^3(K+2)},
\)
which yields \eqref{eq:PA-LB}.

\section{Details for Eq. (\ref{eq:PA-scaling-bounds})}
\label{app:PA_consequences}
From \eqref{eq:PA-UB-beta} and \eqref{eq:conv-YM-second-moment},
\[
\frac{\mathbb E[x_{\rm str}^{\rm PA\,2}]}{\mathbb E[Y_{[M]}^2]}
\le \frac{m(m+1)}{M(M+1)}=\frac{M-1}{M+1}.
\]
Using \eqref{eq:PA-UB-avg} instead yields, as $K\to\infty$ with fixed $M$,
\[
\limsup_{K\to\infty}\frac{\mathbb E[x_{\rm str}^{\rm PA\,2}]}{\mathbb E[Y_{[M]}^2]}
\le \frac{m^2}{M(M+1)}.
\]
The scaling \eqref{eq:conv-scaling} is immediate from \eqref{eq:conv-YM-second-moment}.
The bounds \eqref{eq:PA-scaling-bounds} follow from the PA UB \eqref{eq:PA-UB-beta} and LB \eqref{eq:PA-LB}.

\section{High-SNR preliminaries and uniform threshold}
\label{app:hsnr}

For $x\in[-\tfrac{D}{2},\tfrac{D}{2}]$, we have $x^2{+}d^2\in\big[d^2,\,d^2{+}(\tfrac{D}{2})^2\big]$, hence
\[
C_0:=\max\!\Big\{\,\big|\log_2 d^2\big|,\ \big|\log_2\!\big(d^2{+}(\tfrac{D}{2})^2\big)\big|\,\Big\},
\]
and $C_1:=\frac{d^2+(\tfrac{D}{2})^2}{\ln 2}$, then using $\ln(1+u)\le u$ for $u\ge 0$ we obtain uniformly in $x$,
\[
|\delta(x)|\ \le\ C_0+\frac{\ln(1+v(x))}{\ln 2}\ \le\ C_0+C_1\,2^{-\Lambda}.
\tag{A.1}
\label{eq:app-delta-bound}
\]

Then, for an explicit threshold for geometric expansion, choose
\[
\Lambda_0 := \max\Big\{\,4C_0,\ \log_2(4C_1),\ 1\Big\}.
\tag{A.2}
\label{eq:app-Lambda0}
\]
Then for all $\Lambda\ge\Lambda_0$ and all $x\in[-\tfrac{D}{2},\tfrac{D}{2}]$,
\(
\frac{|\delta(x)|}{\Lambda}\ \le\ \frac{1}{2}
\).

\begin{lemma}[Uniform expansion with explicit remainder]
\label{lem:uniform-bridge}
For all $\Lambda\ge \Lambda_0$,
\begin{equation}
\label{eq:uniform-bridge}
\frac{1}{R_0(x)}\ =\ \frac{1}{\Lambda}+\frac{\log_2(x^2{+}d^2)}{\Lambda^2}\ +\ \mathcal R_\Lambda(x),
\end{equation}
where the remainder is uniformly bounded on $[-\frac{D}{2},\frac{D}{2}]$ by:
\begin{equation}
\label{eq:uniform-remainder}
\big|\mathcal R_\Lambda(x)\big|
\ \le\ \frac{2\,\big(C_0+C_1 2^{-\Lambda}\big)^2}{\Lambda^3}
\ +\ \frac{C_1\,2^{-\Lambda}}{\Lambda^2}\,.
\end{equation}
Consequently, $\sup_x |\mathcal R_\Lambda(x)|=O(\Lambda^{-3})$ and expectations may be
interchanged with \eqref{eq:uniform-bridge} by dominated convergence.
\end{lemma}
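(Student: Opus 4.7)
The plan is to expand $1/R_0(x)$ as a second-order Taylor-style series in $1/\Lambda$ using the exact identity $\frac{1}{1+u}=1-u+\frac{u^2}{1+u}$ applied with $u=\delta(x)/\Lambda$, then identify the leading two terms as the claimed expansion and control the remainder on the compact interval $[-D/2,D/2]$ using the uniform bound \eqref{eq:app-delta-bound} together with the size threshold \eqref{eq:app-Lambda0}.

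Concretely, I would first write $R_0(x)=\Lambda+\delta(x)$ and factor out $\Lambda$ to obtain
\[
\frac{1}{R_0(x)}=\frac{1}{\Lambda}\cdot\frac{1}{1+\delta(x)/\Lambda}.
\]
Verifying that the threshold \eqref{eq:app-Lambda0} indeed forces $|\delta(x)/\Lambda|\le 1/2$ uniformly (using \eqref{eq:app-delta-bound}: $C_0/\Lambda\le 1/4$ from $\Lambda\ge 4C_0$, and $C_1 2^{-\Lambda}/\Lambda\le 1/4$ from $\Lambda\ge\max\{\log_2(4C_1),1\}$), I would apply the exact identity to get
\[
\frac{1}{R_0(x)}=\frac{1}{\Lambda}-\frac{\delta(x)}{\Lambda^{2}}+\frac{\delta(x)^{2}}{\Lambda^{2}\bigl(\Lambda+\delta(x)\bigr)}.
\]
Substituting the definition $\delta(x)=-\log_2(x^2+d^2)+\frac{\ln(1+v(x))}{\ln 2}$ converts $-\delta(x)/\Lambda^2$ into the desired $\log_2(x^2+d^2)/\Lambda^2$ plus a log-series stub, so the remainder naturally splits as
\[
\mathcal R_\Lambda(x)=-\frac{\ln(1+v(x))}{\Lambda^{2}\ln 2}+\frac{\delta(x)^{2}}{\Lambda^{2}\bigl(\Lambda+\delta(x)\bigr)}.
\]

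Next I would bound the two pieces separately and uniformly in $x$. For the first piece, since $v(x)=2^{-\Lambda}(x^2+d^2)\ge 0$, $\ln(1+v(x))\le v(x)\le 2^{-\Lambda}(d^2+(D/2)^2)=C_1 2^{-\Lambda}\ln 2$, giving $|{\cdot}|\le C_1 2^{-\Lambda}/\Lambda^2$. For the second piece, the bound $|\delta(x)|\le C_0+C_1 2^{-\Lambda}$ from \eqref{eq:app-delta-bound} controls the numerator, and the threshold condition $|\delta(x)/\Lambda|\le 1/2$ gives the lower bound $\Lambda+\delta(x)\ge\Lambda/2$, producing $|{\cdot}|\le 2(C_0+C_1 2^{-\Lambda})^2/\Lambda^3$. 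Summing these two bounds yields exactly \eqref{eq:uniform-remainder}, and the $O(\Lambda^{-3})$ scaling follows because $C_1 2^{-\Lambda}$ decays exponentially while the dominant $C_0^2/\Lambda^3$ contribution dictates the rate.

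The only subtlety — rather than a true obstacle — is making sure the calibration of $\Lambda_0$ is tight enough to secure the $|\delta(x)/\Lambda|\le 1/2$ bound, since any looser bound would enlarge the constant in front of the remainder (and force a matching change in the downstream threshold $\Lambda_\star$ used in Section~\ref{subsec:AFL}). The concluding dominated-convergence remark is then immediate: $\sup_x |\mathcal R_\Lambda(x)|$ is deterministic and summable/bounded for all $\Lambda\ge\Lambda_0$, so $\mathbb E[\mathcal R_\Lambda(X)]$ is well-defined and can be freely interchanged with the expansion \eqref{eq:uniform-bridge}, which is precisely what the AFL gap identity \eqref{eq:AFL-gap-exact} requires.
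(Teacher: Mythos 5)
Your proposal is correct and follows essentially the same route as the paper's proof: you verify $|\delta(x)/\Lambda|\le 1/2$ from $\Lambda_0$, isolate the $-\delta/\Lambda^2$ term to produce $\log_2(x^2+d^2)/\Lambda^2$ plus the $\ln(1+v)$ stub bounded by $C_1 2^{-\Lambda}/\Lambda^2$, and bound the higher-order part by $2\delta^2/\Lambda^3$. The only difference is cosmetic — you use the exact identity $\frac{1}{1+u}=1-u+\frac{u^2}{1+u}$ where the paper sums the geometric series and bounds its tail by $2r^2$ for $r\le\frac12$, which yields the identical remainder and identical constants.
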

\begin{IEEEproof}
    Refer to Appendix \ref{proof_l2}.
\end{IEEEproof}
It ensures the geometric-series expansion
\(
\frac{1}{\Lambda+\delta}=\frac{1}{\Lambda}\sum_{k=0}^\infty(-\delta/\Lambda)^k
\)
used in Lemma~\ref{lem:uniform-bridge}. In particular, this yields a uniform
$O(\Lambda^{-3})$ remainder and allows interchanging expectation with the expansion in the main text.

\section{Proof of Lemma 2}\label{proof_l2}
\begin{IEEEproof}
For $|\delta|/\Lambda\le 1/2$,
\[
\frac{1}{\Lambda+\delta}
=\frac{1}{\Lambda}\sum_{k\ge 0}(-\delta/\Lambda)^k
=\frac{1}{\Lambda}-\frac{\delta}{\Lambda^2}+\frac{1}{\Lambda}\sum_{k\ge 2}(-\delta/\Lambda)^k.
\]
Since $\sum_{k\ge 2} r^k=r^2/(1-r)\le 2r^2$ for $0\le r\le 1/2$,
\[
\Big|\frac{1}{\Lambda}\sum_{k\ge 2}(-\delta/\Lambda)^k\Big|
\le \frac{2\,\delta^2}{\Lambda^3}.
\]
Furthermore,
\(
-\frac{\delta}{\Lambda^2}
=\frac{\log_2(x^2{+}d^2)}{\Lambda^2}-\frac{\ln(1+v(x))}{\Lambda^2\ln 2},
\)
and $\ln(1+v)\le v\le (d^2+(\frac{D}{2})^2)2^{-\Lambda}$ gives the second term in
\eqref{eq:uniform-remainder}. Using $|\delta|\le C_0+C_1 2^{-\Lambda}$ yields the first term.
\end{IEEEproof}

~\bibliographystyle{IEEEtran}
\bibliography{PASS_FL_Ref}
\end{document}